\newtheorem{thm}{Theorem}[section]
\newtheorem{prop}[thm]{Proposition}
\newtheorem{rem}[thm]{Remark}
\newcommand{\J}{{\mathbb{J}}}
\newcommand{\R}{{\mathbb{R}}}
\newcommand{\cP}{{\mathcal{P}}}
\newcommand{\bal}{\begin{aligned}}
\newcommand{\enbal}{\end{aligned}}
\newcommand{\be}{\begin{equation}}
\newcommand{\ee}{\end{equation}}
\newcommand{\farc}{\frac}
\newcommand{\pdr}[2]{\frac{\partial{#1}}{\partial{#2}}}
\newcommand{\Div}{\nabla\!\!\!\!\nabla}
\newcommand{\Rm}{{\mathbb R}}
\begin{document}

\title{Geometric aspects of a spin chain}

\author{\textsc Michael Entov$^{1}$,\ Leonid Polterovich$^{2}$,\ Lenya Ryzhik$^{3}$}

\footnotetext[1]{Partially supported by the Israel Science Foundation grant 2033/23.}

\footnotetext[2]{ Partially supported by the Israel Science Foundation grant 1102/20.}

\footnotetext[3]{ Partially supported by NSF grants DMS-1910023 and DMS-2205497 and by ONR grant N00014-22-1-2174. }

\date{\today}

\maketitle

\begin{abstract}
We discuss non-equilibrium thermodynamics of the mean field Ising model from a geometric perspective, focusing on the thermodynamic limit. When the number of spins is finite, the Gibbs equilibria form a smooth Legendrian submanifold in the thermodynamic phase space  whose points describe the stable macroscopic states of the system. We describe the convergence of these smooth Legendrian submanifolds, as the number of spins goes to infinity, to a singular Legendrian submanifold, admitting an analytic continuation that contains both the stable and metastable states.  We also discuss the relaxation to a Gibbs equilibrium when the physical parameters are changed abruptly. The relaxation is defined via the gradient flow of the free energy with respect to the Wasserstein metric on microscopic states, that is, in the geometric language, via the gradient flow of the generating function of the equilibrium Legendrian with respect to the ghost variables. This leads to a discrete Fokker-Planck equation when the number of spins is finite. We show that in the thermodynamic limit this description is closely related to the seminal model of relaxation proposed by Glauber. Finally, we find a special range of parameters where such relaxation happens instantaneously, along the Reeb chords connecting the initial and the terminal Legendrian submanifolds.
\end{abstract}


\section{Introduction}

Contact geometry provides a useful language for certain facets of
thermodynamics, and there is an extensive body of previous work discussing the
relation between the two subjects -- see e.g.  \cite{BLN,EP,Goto-JMP2015,Goto,GLP,Grmela,Haslach,LimOh,V}.
In this paper we consider an aspect of this connection that, to the best of our knowledge, has not
been discussed in the literature before -- the behavior of the geometric constructions describing
thermodynamic systems and thermodynamic processes in
the thermodynamic limit. To be concrete, we study this question in detail for the
example of the Curie-Weiss magnet consisting of $N$
pair-wise interacting spins \cite{FV}, as $N\to +\infty$.

Let us briefly discuss the main results of the present paper.

\bigskip
\noindent
{\bf Convergence of equilibrium smooth Legendrian submanifolds to a singular Legendrian submanifold in the thermodynamic limit.}
It is well-known (see e.g.~\cite{GrO,Her}) that equilibrium states of thermodynamic
systems lie on the Legendrian submanifolds of the thermodynamic phase spaces $\R^{2n+1}$, equipped
with the standard contact structure. We show that smooth Legendrian submanifolds describing
thermodynamic equilibria for a finite system may converge to a singular Legendrian submanifold in the thermodynamic limit. Interestingly enough, this kind of convergence attracts now a lot of attention in symplectic and contact topology in the study of the completion of the space of Legendrian submanifolds (see Remark \ref{rem-conv} below) but there
are few natural examples of such singular limits. We give
a precise geometric description of such a convergence in the basic example of the Curie-Weiss magnet
-- see Theorem~\ref{thm-limit}.

\bigskip
\noindent
{\bf Non-equilibrium thermodynamics, gradient dynamics and the thermodynamic limit.}
We use contact geometry in order to study the non-equilibrium thermodynamics for the Curie-Weiss magnet consisting of $N$ pair-wise interacting spins. Our general strategy is related to the version of the GENERIC approach to non-equilibrium thermodynamics introduced in \cite{GrO}. Namely, as in \cite{GrO}, we study a generating function for the Legendrian submanifold describing
thermodynamic equilibria, with the so-called ghost variables given by probability measures on the space of microscopic states of the thermodynamic system. The thermodynamic relaxation process is then described by the gradient dynamics of the generating function
on the space of such measures. However, unlike in \cite{GrO}, our main interest is in the thermodynamic limit of such gradient dynamics
in the contact geometric context.

More specifically, the starting point of our consideration is the observation that we learnt from \cite{LimOh} that the generating function of the equilibrium Legendrian submanifold is given by the free energy of the system that we will denote by $z$.
In the setting of the Curie-Weiss magnet consisting of $N$
pair-wise interacting spins, the generating function, that we denote by~$\Psi (q,\rho)$,
depends on the external
magnetic field $q$ as the position variable and on
a probability density $\rho$ on the space of the microscopic states of the system as a so-called ghost variable.
The momentum variable $p$ is the generalized thermodynamic pressure. The equilibrium Legendrian submanifold $\Lambda$
lies in the thermodynamics phase space $\R^3 (p,q,z)$ equipped with the standard contact form $dz-pdq$.
For a finite $N$, a thermodynamic relaxation of the system to an equilibrium -- for instance, after
an instant change of the temperature -- is modeled by the aforementioned gradient flow
of the generating function $\Psi (q,\rho)$ with respect to~$\rho$ and with respect to a
Wasserstein distance on the space of the probability densities
on the finite state space. It turns out that this evolution takes the form of
a discrete Fokker-Planck equation. This is in agreement with the observation in
\cite{JKO,Maas,Otto}
that the dynamics defined by Fokker-Planck equations can be presented as the
gradient flow  with respect to a Wasserstein distance.
The long time limit of the flow is a point on the
equilibrium Legendrian submanifold  corresponding to the new Gibbs equilibrium.
We focus on the thermodynamic limit of this gradient dynamics, as $N\to +\infty$ -- see
Section~\ref{subsec-grad-flow-wasserstein-fokker-planck}.
For the Curie-Weiss magnet, the dynamics defined by the discrete Fokker-Planck equation, or by the Wasserstein distance gradient flow, turns out to be closely
related to the Glauber dynamics,~\cite{BH,Glauber}, a well-known Markov process describing relaxation of the magnet. We explain the connection between the gradient flow dynamics and the Glauber dynamics and describe the behavior of this connection
in the thermodynamic limit -- see Theorem \ref{thm-drifts} for a
precise formulation.

\bigskip
\noindent
{\bf Basins of attraction.}  Our model enables one to describe non-equilibrium dynamics
with the initial conditions satisfying certain constraints. To describe these constraints, we consider,
 for a finite $N$, the set $\mathcal{B}_N\subset\R^3$ formed by the points of the thermodynamic phase space
$\R^3 (p,q,z)$ that can be represented, using $\Psi$, by measures $\rho$ that are {\it not} necessarily Gibbs (that is,
do {\sl not} necessarily correspond to a thermodynamic equilibrium state)
but do evolve to a Gibbs measure with the gradient flow mentioned above. Namely,
\begin{equation}\label{eq-basin-intro}
\mathcal{B}_N := \bigg\{ (p_0,q_0,z_0)
 \in \R^3 \; \Big{|}   \; \exists \rho \;:\; p_0= \frac{\partial \Psi}{\partial q}(q_0,\rho),  \; z= \Psi(q_0,\rho)\bigg\}.
 \end{equation}
We call such a set $\mathcal{B}_N$ {\it the basin of attraction} of the equilibrium Legendrian $\Lambda$.
We give a precise description of
$\mathcal{B}_N$ (see Proposition~\ref{prop-basin-descr}) and prove a result on
the limiting shape of~$\mathcal{B}_N$ as $N\to +\infty$ -- see Theorem~\ref{thm-phys}.

\bigskip
\noindent{\bf A connection to contact topology:} In Section \ref{subsec-chords} we discuss the
relaxation of the Curie-Weiss magnet after a sudden change of its temperature and the external magnetic field. This process is described as an instant jump of free energy and the relaxation is modeled by the gradient Fokker-Planck evolution. We focus on such a relaxation in the thermodynamic limit and consider the special case, occurring in a certain range of parameters and for a special value of the magnetic field, where the relaxation is instant: after the free energy jump, the system arrives at the microscopic equilibrium of the Fokker-Planck evolution. Our observation is that on the macroscopic level, this special case corresponds to a Reeb chord, i.e., a segment parallel to $z$-axis, joining the initial and the terminal equilibrium Legendrian submanifolds. Such chords are central objects of study in modern contact topology (see e.g.~\cite{EP} and the references therein).
While the existence of the instantaneous jumps along  the Reeb chords in our case is relatively simple to see, it may not be so in more sophisticated thermodynamic models. We expect that topological methods will be useful for this task. We should emphasize that such instantaneous jumps
exist only in the thermodynamic limit
and not for a finite system.


\medskip
\noindent
{\bf Organization of the paper:}
\\
In Section~\ref{sec-setting}, we first describe the basic notions of thermodynamics in Section~\ref{sec:glossary}
and then recall the notion of a Legendrian submanifiold in contact geometry in Section~\ref{sec:contact}. We also explain there the
interpretation of a thermodynamic equilibrium in terms of a Legendrian submanifold. The finite Curie-Weiss model is introduced in Section~\ref{finite-CW}.
Its thermodynamic limit is considered in Section~\ref{subsec-limit}, in terms of the convergence of the properly rescaled Legendrian submanifolds.
This convergence as well as the limiting singular Legendrian submanifold are described in
Theorem \ref{thm-limit}.

Section~\ref{sec:noneq} discusses the non-equilibrium dynamics and relaxation to a thermodynamic equilibrium in terms of the gradient flow
in the Wasserstein distance on the space of probability measures on the state space.
Recall that these measures serve as the ghost variables for the equilibrium Legendrian submanifold, and that the resulting evolution is described by the (discrete) Fokker-Planck equation. Next we focus on the thermodynamic limit of the gradient dynamics.
In particular, we establish some similarities between the gradient flow approach and the Glauber dynamics, see Theorem~\ref{thm-drifts}. This theorem is proved in Section~\ref{sec-appendix}.

It should be mentioned that the model of the non-equilibrium thermodynamics presented
in this paper is defined in a certain part of the thermodynamic phase space.
In Sections~\ref{sec:basin} and \ref{subsec-basin} we elaborate this and detect the basin of attraction for the Curie-Weiss model.

One of the themes of this paper is the interplay between gradient (Fokker-Planck) dynamics and contact dynamics within the framework of non-equilibrium thermodynamics. In Section \ref{subsec-chords} we relate the gradient dynamics to the Reeb chords  connecting initial and terminal equilibrium Legendrian submanifolds. In Section \ref{sec-cont} we compare the Fokker-Planck drift with contact Hamiltonian dynamics for a toy thermodynamic system.

We finish with a short summary of results in Section~\ref{sec:conlcusion}. The appendices in Sections~ \ref{App-2}-\ref{sec-entropy}
contain some computations used in the main text.

{\bf Acknowledgment.} We express our deep gratitude to Shin-itiro Goto for numerous illuminating
discussions, for useful comments on the manuscript, and for help with figures.
 Figures 1,6,7 were produced by the authors with the use of ChatGPT 4.
 Preliminary results of this paper were presented by L.P. at a conference ``From smooth to $C^0$ symplectic geometry: topological aspects and dynamical implications" held in July, 2023 at the Centre International de Rencontres Math\'{e}matiques (Marseille, France). The recording of this talk, called ``Contact topology and thermodynamics", is available on YouTube. L.P. thanks the organizers and the CIRM for this opportunity.

\section{Contact geometry and thermodynamic equilibria}\label{sec-setting}

\subsection{Thermodynamic equilibria}\label{sec:glossary}

{We start with a brief description of thermodynamic equilibria in a way that will be convenient for their geometric interpretation in the following section.}

{The space of the microscopic states of a thermodynamic system is modeled by a manifold~$M$, formed by the microstates, together with a measure  $\mu$. We model a macroscopic thermodynamic state
by a probability measure on $M$ that has a smooth positive density~$\rho>0$ with respect to $\mu$,
so that~$\int_M \rho d\mu = 1$. Such a density describes the distribution of a physical quantity (e.g. spin) over the microstates.} \color{black}
We  denote by $\mathcal{P}$ the set of all such densities.
{The positivity assumption on $\rho$ allows
us  to define the entropy of a macroscopic
state as}
\[
S(\rho) = -\int_M \rho \ln \rho~d\mu\;\;,~~ {\rho\in{\mathcal P}}.
\]

The free energy of the system is
\begin{equation}\label{eq-Phi-rho}
\Phi(q,\rho)=
-\beta^{-1}S(\rho) + \int_M H(q,m) \rho(m)d\mu(m),~~q\in\Rm^n,~\rho\in\cP.
\end{equation}
Here, $q\in\Rm^n$ is an external {physical} parameter,  $H(q,m)$  is the Hamiltonian that
measures the energy of a microscopic state, and
$\beta>0$ is the inverse temperature of the system.

An equilibrium of a thermodynamic system for fixed inverse temperature $\beta >0$,
and  external parameter $q\in\Rm^n$ is  a probability density in $\cP$
that is a minimizer
of the variational problem
\begin{equation}\label{eq-var}
{\min_{\rho\in{\mathcal P}}\Phi(q,\rho).}
\end{equation}
Note that for every $q$ fixed,
the functional $\Phi(q,\cdot)$ is strictly convex on $\mathcal{P}$.
Hence,
it has only one critical point known as the
Gibbs distribution. To find it, note that
at an equilibrium we have
\be\label{nov2002}
\frac{\partial \Phi}{\partial \rho}(q,\rho) =0,
\ee
from which, together with the definitions of $S(\rho)$ and $\Phi(q,\rho)$,  we deduce that
\[
\ln \rho +\beta H = \text{const}.
\]
This readily yields the Gibbs distribution
\[
\rho_G(q,m)= \frac{e^{-\beta H(q,m)}}{\mathcal{Z}(q)}.
\]
Here, $\mathcal{Z}(q) = \int_M e^{-\beta H(q,m)} d\mu(m)$ is   the {normalization constant known as the}
partition function.
{An important role in the considerations below will be played by the free energy of the Gibbs distribution}
\begin{equation}\label{eq-phiH}
\phi_H(q) = \Phi(q,\rho_G(q,\cdot)).
\end{equation}

\begin{prop}\label{prop-secondder}
Assume that the Hamiltonian $H(q,m)$ is linear 
in $q$ for every $m\in M$. Then
\begin{equation}\label{eq-phi-form}
\phi_H(q) = -\beta^{-1}\ln \mathcal{Z}(q)\;,
\end{equation}
and
\begin{equation}\label{eq-phi-der}
\phi''_H(q) = \beta\Big( \Big(\int \frac{\partial H}{\partial q} \rho_G d \mu \Big)^2 -
\int\Big(\frac{\partial H}{\partial q}\Big)^2 \rho_G d\mu \Big) \;.
\end{equation}
In particular, $\phi_H(q)$ is a smooth concave function.
\end{prop}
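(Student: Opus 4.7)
The plan is to establish both identities by direct computation, using the linearity of $H$ in $q$ precisely at the point where it is needed. For the first formula, I would substitute the explicit form $\rho_G(q,m) = e^{-\beta H(q,m)}/\mathcal{Z}(q)$ into the definition \eqref{eq-Phi-rho}. Since $\ln \rho_G = -\beta H - \ln \mathcal{Z}$, a short calculation gives $S(\rho_G) = \beta \int_M H \rho_G\, d\mu + \ln \mathcal{Z}(q)$. Inserting this into $\Phi(q,\rho_G)$ makes the two $\int H \rho_G\, d\mu$ terms cancel, leaving $\phi_H(q) = -\beta^{-1}\ln \mathcal{Z}(q)$.

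For the second formula, I would differentiate $\phi_H = -\beta^{-1}\ln \mathcal{Z}$ twice. The first derivative uses $\partial_q \mathcal{Z}(q) = -\beta \int (\partial_q H) e^{-\beta H}\, d\mu$ and produces
\[
\phi_H'(q) \;=\; \int_M \frac{\partial H}{\partial q}(q,m)\,\rho_G(q,m)\, d\mu(m),
\]
the expected value of $\partial_q H$ under the Gibbs measure. To differentiate once more I would exploit the linearity hypothesis: because $H(q,m)$ is linear in $q$, the factor $\partial_q H$ does not depend on $q$ and passes through the $q$-derivative. What remains is $\partial_q \rho_G$, which by the quotient rule equals $-\beta \rho_G\bigl(\partial_q H - \int \partial_q H\, \rho_G\, d\mu\bigr)$. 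Substituting yields exactly \eqref{eq-phi-der}.

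Concavity is then immediate: the expression in parentheses on the right-hand side of \eqref{eq-phi-der} is $-1$ times the variance of the random variable $\partial_q H$ with respect to the probability measure $\rho_G\, d\mu$, so it is non-positive; combined with $\beta > 0$ this gives $\phi_H''(q) \le 0$. Smoothness of $\phi_H$ follows automatically because $\mathcal{Z}(q)$ is smooth and strictly positive.

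I do not expect any genuine obstacle beyond bookkeeping. The only delicate conceptual point is isolating where the linearity hypothesis is used: without it, a second differentiation would produce an extra term $\int (\partial_q^2 H) \rho_G\, d\mu$, which spoils the clean variance formula and, in general, could destroy concavity. Linearity kills this term and leaves only the variance, giving simultaneously the explicit formula and the sign that yields concavity.
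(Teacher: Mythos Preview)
Your proposal is correct and matches the paper's approach: the paper simply says the formulas follow by a direct calculation using the linearity of $H$ in $q$, and that concavity follows from \eqref{eq-phi-der} by the Cauchy--Schwarz inequality, which is exactly your variance argument. Your write-up in fact supplies more detail than the paper's two-line proof, and your remark isolating precisely where linearity enters (to kill the $\int (\partial_q^2 H)\rho_G\,d\mu$ term) is a useful addition.
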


\begin{proof} Formulas \eqref{eq-phi-form} and \eqref{eq-phi-der} are obtained
by a direct calculation that
uses the linearity of~$H(q,m)$ in $q$. The concavity of $\phi_H(q)$ follows from
\eqref{eq-phi-der} by the Cauchy-Schwartz inequality.
\end{proof}

{In addition to the entropy and the free energy, another important collection of functionals of a given macroscopic state $\rho\in\cP$} are
the generalized pressures~\cite{Be}
\begin{equation}
\label{eq-free-en-gf}
p_i{(q,\rho)} = -\frac{\partial \Phi}{\partial q_i}  {(q,\rho)}
= \int_M \frac{\partial H}{\partial q_i} {(q,m)} \rho {(m)} d\mu({m})\;, \; i=1,\dots , n\;.
\end{equation}
{Their role will become clear in the next section.}

\subsection{{Thermodynamic equilibria as Legendrian submanifolds}}\label{sec:contact}

The basic notions introduced above admit the following geometric interpretation. {Given a macroscopic state $\rho\in\cP$ we will  consider
the point $(p(q,\rho),q,-\Phi(q,\rho))$ as an element of}
the thermodynamic phase space
$\J^1\R^n:=T^*\R^n \times \R$ equipped with the coordinates $(p,q,z)$. Here,~$q \in \R^n$ is an external parameter, $p \in (\R^n)^*$ is the corresponding generalized pressure, and $z \in \R$ is the free energy taken with the opposite sign.
The space~$\J^1\R^n$ is equipped with the contact 1-form
$$\lambda = dz - \sum_{i=1}^n p_idq_i\;,$$
called the Gibbs form. Recall that being contact means that $\lambda \wedge (d\lambda)^n$ is a volume form.

An $n$-dimensional submanifold $\Lambda \subset \J^1\R^n$ is called Legendrian
 if $\lambda$ vanishes on $T\Lambda$. The simplest example of a Legendrian submanifold is  the
1-jet of a smooth function $f: \R^n \to \R$,
\begin{equation}\label{eq-lambdaef}
\Lambda_{f} = \Big\{ (p,q,z)\in{\J^1\R^n}\Big|~z= f(q),~ p = \frac{\partial f}{\partial q}(q)\Big\}\;.
\end{equation}
This generalizes as follows. Given a {smooth}
function $$\Psi: \R^n \times E \to \R\;,$$
where $E$ is a space of auxiliary variables $\xi$,
the set \begin{equation}\label{eq-gf}
\Lambda_{\Psi} = \Big\{\hbox{$(p,q,z)\in{\J^1\R^n}$ $\Big|$ $\exists\xi\in E$ such that }
\frac{\partial \Psi}{\partial \xi} (q,\xi) = 0,~
p= \frac{\partial \Psi}{\partial q} (q,\xi),~ z= \Psi(q,\xi)\Big\}
\end{equation}
is a (possibly singular) Legendrian submanifold for a generic $\Psi$. The function $\Psi$
is called a generating function of $\Lambda_{\Psi}$, and the variables $\xi$ are called
the ghost variables.

With this language, in light of formulas (\ref{nov2002}) and \eqref{eq-free-en-gf},
an equilibrium of a thermodynamic system discussed in
the previous section can be described {as a point on} a Legendrian submanifold
\be\label{nov2106}
\Lambda = \Big\{(p,q,z) \in \J^1\R^n \; \Big{|}   \; \exists \rho \in \mathcal{P} \;:\; \frac{\partial \Phi}{\partial \rho}(q,\rho) =0, \; p = - \frac{\partial \Phi}{\partial q}(q,\rho), \; z= -\Phi(q,\rho)\Big\}\;.
\ee
{That is,
the macroscopic densities $\rho\in\cP$ play the role of the ghost variables, and }
the (minus) free energy $-\Phi$ is a generating function of $\Lambda$ (this observation also appears in \cite{LimOh}).
{The ghost variables that generate the  points on the Legendrian submanifold are precisely the Gibbs equilibria.}

 When the free energy $\Phi(q,\rho)$ is convex in $\rho$, so that it has a unique critical point, the equilibrium Legendrian (\ref{nov2106}) can be written
in the form of (\ref{eq-lambdaef})
\begin{equation}\label{nov2108}
\Lambda = \Big\{  (p,q,z)\in{\J^1\R^n}\Big|~z= f(q),~ p = \frac{\partial f}{\partial q}(q)\Big\}\;,
\end{equation}
with $f(q)=-\phi_H(q)$.

Note that the fact that the equilibrium set is Legendrian, i.e. satisfies
$dz-pdq=0$, can be considered  as a manifestation of the Gibbs relation,
which combines the first and the second laws of thermodynamics.

\subsection{A finite Curie-Weiss magnet} \label{finite-CW}

{The observation that thermodynamic equilibria can be expressed in terms of Legendrian
submanifolds is not knew and goes back at least to~\cite{GrO}, while the idea that macroscopic
state densities play the role of the ghost variables has also appeared in~\cite{LimOh}.
Our interest here is two-fold:
first, to understand what happens to this construction in the thermodynamic limits, and, second, to suggest a possible
version of non-equilibrium dynamics in that context.}

Our central example is the basic Curie-Weiss model  of $N$ spins with
values~$\sigma= +1$ (spin up), and $\sigma=-1$ (spin down) in the presence of an external
magnetic field $q \in \R$.  The mean spin
$$
{m:=}\frac{1}{N} \sum_{j=1}^N \sigma_j$$
takes values in the set
$$M_N := \{-1,-1+2/N, \dots, 1-2/N, 1\}\;.$$

The space of microscopic states $\Upsilon_N$ consists of
all spin configurations $\{\sigma_j\}, j=1,\dots, N$, with $\sigma_j=\pm1$. The
 Curie-Weiss Hamiltonian is
\begin{equation}\label{eq-hamnoneff}
h(\sigma) = -N(qm+bm^2/2)\;.
\end{equation}
Here, the constant $b >0$ is determined by the magnetic properties of the material.
The sign $b >0$ means that  the spins tend to align when the external field is applied,
which is true for a ferromagnetic material.

An efficient tool for studying the Curie-Weiss magnet is a reduction from the space $\Upsilon_N$ of~$2^N$ spin configurations $\{\sigma_j\}, j=1,\dots, N$ to a (much smaller) space $M_N$ of the cardinality $N+1$ (see, e.g., \cite{BH,FV}).
It turns out that after this reduction, the effective
Hamiltonian $H_N$ of the Curie-Weiss magnet with $N$ spins is given by
\begin{equation} \label{eq-ham-red}
H_N(q,m) = N F_{b,\beta}(q,m) +\beta^{-1} r_N(m)\;,~~m\in M_N\;,
\end{equation}
where
\begin{equation}\label{eq-redCW}
F_{b,\beta}(q,m) = -qm - \farc{bm^2}{2} + \frac{1+m}{2\beta} \ln \frac{1+m}{2} +
\frac{1-m}{2\beta}\ln \frac{1-m}{2}\;.
\end{equation}
The remainder $r_N(m)$ in \eqref{eq-ham-red} is
independent of $q$ and, for $m \in (-1,1)$,  has the form
\begin{equation} \label{eq-remainder}
r_N(m) = \frac{1}{2}\ln (8\pi N)- \frac{1}{2}\ln\frac{4}{1-m^2} - \frac{1}{12N}\left(1-\frac{4}{1-m^2}\right) + O(1/N^2)\;.
\end{equation}
It should be mentioned that the quadratic term $ -qm - bm^2/2 $ in \eqref{eq-redCW}
is just $N^{-1}h$, with $h$ as in \eqref{eq-hamnoneff}. It is responsible for the interaction of spins. The logarithmic terms in \eqref{eq-redCW} and the error term $r_N(m)$
appear when one recalculates the entropy (entering into the expression~\eqref{eq-Phi-rho} for the free energy) of probability distributions on $\Upsilon_N$ to their push-forwards to $M_N$. We refer to Section \ref{sec-entropy} for  the details of this calculation.

From now on, we work on the space $M_N$.
We view it as a $0$-dimensional manifold equipped with the measure $\mu_N(m)= 2/N$ for
every $m \in M_N$. The set of all positive probability densities $\mathcal{P}_N$
on $M_N$ is the  simplex 
\be\label{nov2112}
\Big\{(\rho_1,\dots,\rho_{N+1}) \in \R^{N+1}\;:\; \sum_i \rho_i = \frac{N}{2}\;,
\rho_i >0 \; ~~\forall i\Big\}\;.
\ee

Let us mention also that ${\partial H_N}/{\partial q} = - Nm$,
and hence the generalized pressure $p$ is given by
\begin{equation}\label{eq-pres-cw}
p = N \langle m \rangle_{\rho}\;,
\end{equation}
where $\langle \cdot \rangle_{\rho}$ stands for the expectation with respect to the measure
$\rho d\mu$. The expectation of the mean spin, $\langle m \rangle_{\rho}$, is called the magnetization.

We will denote by $\Lambda_N$ the equilibrium
Legendrian of the finite Curie-Weiss model with~$N$ spins, as given by (\ref{nov2108}), with $f(q)=-\phi_{H_N}(q)$:
\begin{equation}\label{jul1202}
\Lambda_N = \Big\{  (p,q,z)\in{\J^1\R^n}\Big|~z= f(q),~ p = \frac{\partial f}{\partial q}(q)\Big\}\;.
\end{equation}
{In the next section, we will investigate the thermodynamic limit $N\to+\infty$ in terms of the limit of that submanifold.}

\subsection{The thermodynamic limit $N \to +\infty$}\label{subsec-limit}

{We now describe the thermodynamic limit  $N \to +\infty$ of the equilibrium
Legendrian $\Lambda_N$ of the Curie-Weiss model, given by (\ref{jul1202})}.
{To this end, we first introduce
the submanifold}
\be\label{nov2004}
\Lambda_\infty :=
\left\{ (p,q,z)\in{\J^1\R^n}\Big|~ \frac{\partial F_{b,\beta}}{\partial p} (q,p)=0,~
~z= - F_{b,\beta}(q,p)\right\} \subset \J^1\R^1\;,
\ee
It is straightforward to check
that with $F_{b,\beta}$ given by \eqref{eq-redCW},
the submanifold $\Lambda_\infty$
is Legendrian. In the infinite model context, the variable $p$ is the limit of the rescaled pressure $p\to p/N$  of
the Curie-Weiss model with $N$ spins.
By \eqref{eq-pres-cw}, it can be identified with the  limit of the magnetizations
$\langle m \rangle_\rho$.  Because of that connection, we call the
Legendrian submanifold $\Lambda_\infty$
the  Curie-Weiss Legendrian.

Note that the equation $\frac{\partial F_{b,\beta}}{\partial p} (q,p)=0$  in the definition
of~$\Lambda_\infty$ has the form
\be\label{nov2102}
p= \tanh \beta(q+bp),
\ee
which is the standard mean-field equation for an infinite Curie-Weiss model \cite{FV,Mussardo}.
In what follows, we focus on the regime $b\beta > 1$ -- under this assumption the system
exhibits a phase transition. The graph of $F_{b,\beta}$ in this case has the form as in Fig.~1,
{with two local minima and a local maximum}.
\begin{figure}[htb]
\centering
\includegraphics[width=0.6\textwidth]{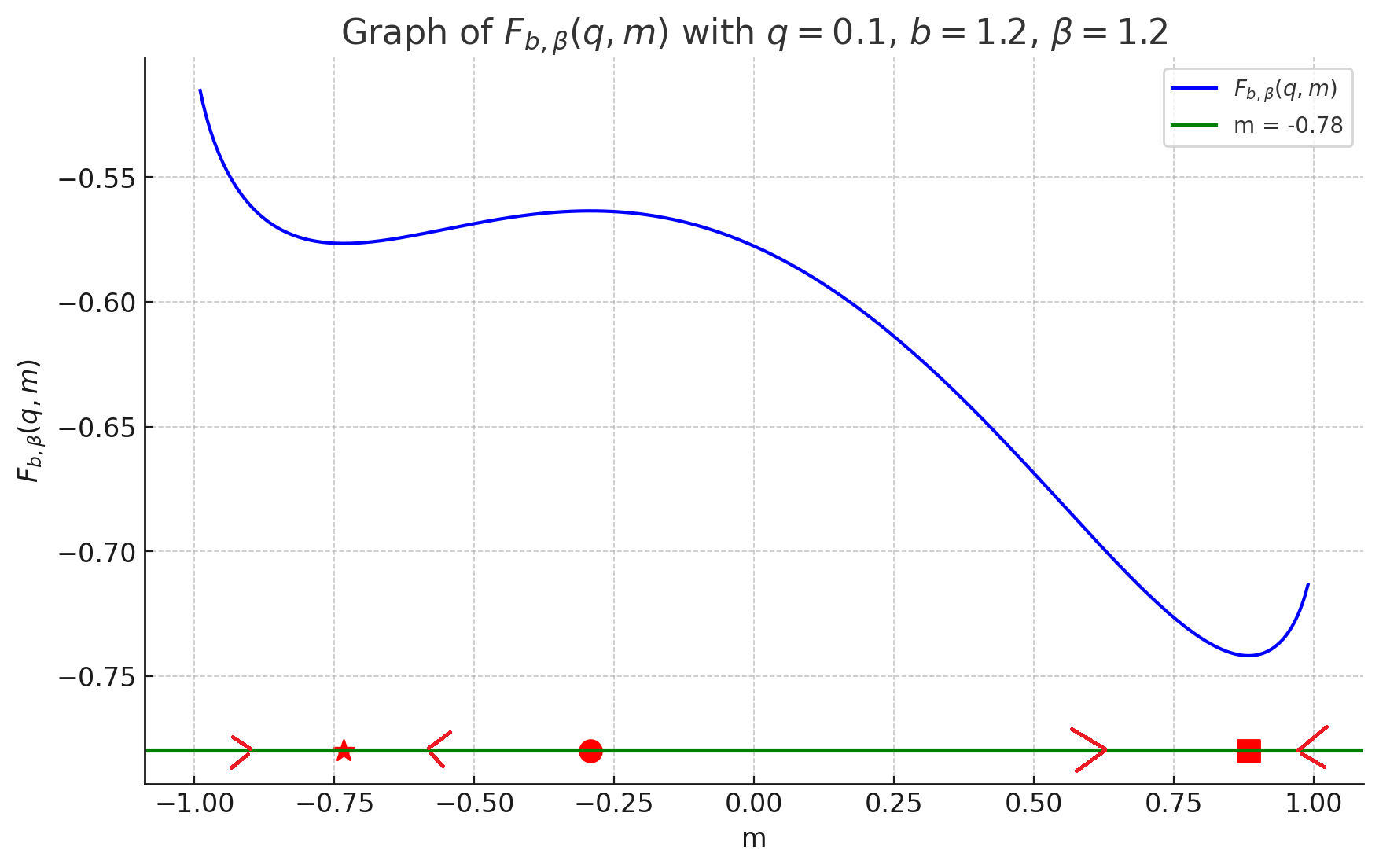}
\caption{Graph of $F_{b,\beta}$;
arrows indicate the (drift) vector field expressing relaxation process;
$\bullet$ denotes unstable  point, $\star$ metastable point, $\blacksquare$ stable point.}
\end{figure}
The projections of the submanifold $\Lambda_\infty$ onto the $(q,p)$- and $(q,z)$-planes {in that regime}
are
depicted in Fig.~2 and Fig.~3, respectively.
\begin{figure}[!tbp]
  \centering
  \begin{minipage}[b]{0.45\textwidth}
    \includegraphics[width=0.7\textwidth]{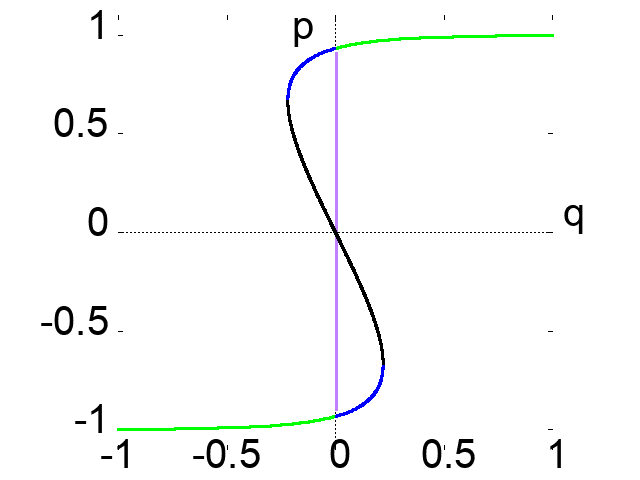}
    \caption{Curie-Weiss Legendrian: Lagrangian projection}
  \end{minipage}
  \hfill
  \begin{minipage}[b]{0.45\textwidth}
  \includegraphics[width=0.7\textwidth]{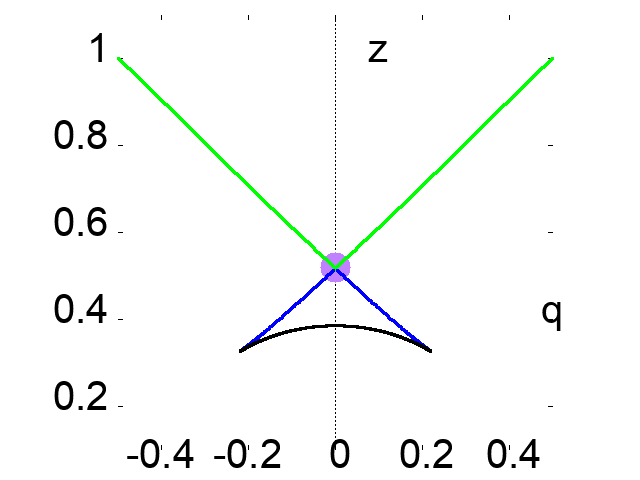}
    \caption{Curie-Weiss Legendrian: front projection}
  \end{minipage}
\end{figure}

The part of $\Lambda_\infty$ presented in green on the front projection (see Fig.~2)  corresponds to the absolute minimum of $F_{b,\beta}$
and is called
{stable}, the blue part corresponding to the second (larger) minimum is called metastable, and the black part corresponding to the maximum is called
{unstable}.
These are the three critical points of $F_{b,\beta}$
seen in Fig.~1.

The stable part of $\Lambda_\infty$ in $(z,q)$-plane (the green curve in Fig.~3) is given by the equation
\begin{equation}
\label{eq-stablepart}
z= f(q) : = \max_m \left(-F_{b,\beta}(q,m)\right)\;.
\end{equation}
Note that the function $f$ is smooth away from the point $q=0$. When $b\beta >1$, the function~$f$ is continuous, albeit non-smooth at $0$. This reflects the fact that the infinite Curie-Weiss model exhibits a phase transition of the first type for such values of $b$ and $\beta$. A direct calculation shows that $f$ is convex.

\begin{rem}\label{rem-free-energy}{\rm Denote by $\mathcal{Z}_N$ the partition function of the Curie-Weiss model with $N$ spins. By \cite[Theorem 2.8]{FV}, after adjusting notation,
$$f(q) = \beta^{-1}\lim_{N\to +\infty} \frac{1}{N}\mathcal{Z}_N\;.$$
In many sources (see e.g. formula (3.1.6) in \cite{Mussardo} or formula (10) in \cite{KPW})
this quantity (taken with the opposite sign) is called {\it the free energy} per spin for the infinite mean field Ising model. This agrees with our description of the $z$-coordinate in the thermodynamic phase space. It should be mentioned that \cite{FV} uses different terminology.
}
\end{rem}

Next,  we explain how the
Legendrian $\Lambda_\infty$ in (\ref{nov2004}) appears in the thermodynamic limit $N\to+\infty$ from the geometric point of view.
Let $\Lambda_N$ be the equilibrium
Legendrian of the finite Curie-Weiss model with $N$ spins, as given by~(\ref{jul1202}), with $f(q)=-\phi_H(q)$.
Observe that the natural $\R_+$-action on $\mathbb{J}^1\R$: $$c \cdot (p,q,z):= (cp,q,cz)\; \;
 \forall c> 0, (p,q,z) \in \R^3\;$$
 is conformal with respect to the Gibbs contact form
$dz-pdq$. As a result, it maps Legendrians to Legendrians.
Thus,
the rescaled submanifold
%
$$\overline{\Lambda}_N := \frac{1}{N}\Lambda_N= \left\{\left(\frac{p}{N},q,\frac{z}{N}\right)\;:\; (p,q,z) \in \Lambda_N \right\}\;,$$
is also Legendrian. It can be written in the form of (\ref{nov2108})
\begin{equation}\label{eq-barlambdaN}
\overline{\Lambda}_N= \Lambda_{f_N}:= \left\{ z= f_N(q),~
p = \frac{\partial f_N}{\partial q}(q)\right\}\;,
\end{equation}
with $f_N(q) = -N^{-1}\phi_{H_N}(q)$.
Note that by \eqref{eq-ham-red}-\eqref{eq-redCW}, $H_N$ is linear in $q$.  Thus, by Proposition~\ref{prop-secondder},
\be\label{24jul1204}
f_N(q) = \beta^{-1}N^{-1}\ln \mathcal{Z}_N(q)\;,
\ee
where $\mathcal{Z}_N(q)$ is the partition function, and furthermore $f_N(q)$ is smooth and convex.

Recall that {\it the Hausdorff distance} between two compact subsets of $\R^3$ is
the infimum of $\epsilon$ such that each of the subsets lies in the $\epsilon$-neighbourhood
of the other. The next theorem describes the convergence of the Legendrians $\bar\Lambda_N$ to  a part of the Curie-Weiss Legendrian~$\Lambda_\infty$
defined in (\ref{nov2004}) in the thermodynamic limit $N\to+\infty$.

\begin{thm}[Geometric description of the thermodynamic limit] \label{thm-limit}
Write $\pi_q$ for the projection $\R^3(p,q,z) \to \R(q)$.
\begin{itemize}
\item[{(i)}] The sequence of functions $f_N$ defined by (\ref{24jul1204}) converges uniformly to $f$ given by
 \eqref{eq-stablepart} on any compact interval $I\subset \R$ as $N \to +\infty$;
\item[{(ii)}] The derivatives $f'_N$ converge pointwise to $f'$ outside of $q=0$ as $N \to +\infty$.
\item[{(iii)}] The Legendrians $\overline{\Lambda}_N \cap \pi_q^{-1}(I) $ converge in the Hausdorff sense as $N \to +\infty$ to a piece-wise smooth Legendrian $L$ consisting of the {\it stable} part of the Curie-Weiss Legendrian submanifold $\Lambda_\infty$, defined in (\ref{nov2004}),
    restricted to $\pi_q^{-1}(I)$  and a linear segment connecting the points $(p=1,q=0,z=0)$ and $(p=-1,q=0,z=0)$.
\end{itemize}
\end{thm}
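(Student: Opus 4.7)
The plan is to reduce everything to the analytic behavior of $f_N(q)=\beta^{-1}N^{-1}\ln\mathcal{Z}_N(q)$ as $N\to\infty$, upgrade pointwise convergence to uniform convergence via convexity, and then pass to the graph description of $\overline{\Lambda}_N$.

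For (i), I would start from \eqref{24jul1204} combined with the expansion
\[
\mathcal{Z}_N(q)=\sum_{m\in M_N}\mu_N(m)\,e^{-\beta N F_{b,\beta}(q,m)-r_N(m)},
\]
which comes from \eqref{eq-ham-red}. Since \eqref{eq-remainder} gives $r_N(m)=\tfrac{1}{2}\ln(8\pi N)+O(1)$ on any compact subset of $(-1,1)$, the factor $e^{-r_N(m)}$ is polynomial in $N$, and a routine Laplace-type estimate (sandwiching the sum between its largest term and $N$ times that term, up to polynomial-in-$N$ prefactors) yields
\[
\tfrac{1}{N}\ln\mathcal{Z}_N(q)\longrightarrow-\beta\min_{m\in[-1,1]}F_{b,\beta}(q,m),\qquad N\to\infty,
\]
so $f_N(q)\to f(q)$ pointwise by \eqref{eq-stablepart}. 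Each $f_N$ is convex by Proposition \ref{prop-secondder}, and $f$ is the supremum of a family of affine functions in $q$ (hence continuous and convex). A standard theorem of convex analysis then upgrades pointwise convergence to uniform convergence on every compact interval, proving (i); the analogous standard result on convergence of one-sided derivatives of convex functions yields (ii) at every $q\neq 0$, since $f$ is smooth there.

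For (iii), I would use that $\overline{\Lambda}_N\cap\pi_q^{-1}(I)$ is exactly the one-jet graph $\{(f_N'(q),q,f_N(q)):q\in I\}$ and verify both Hausdorff inclusions. In one direction: if $(f_N'(q_N),q_N,f_N(q_N))\to(p_\infty,q_\infty,z_\infty)$, then the uniform convergence in (i) forces $z_\infty=f(q_\infty)$; when $q_\infty\neq 0$, part (ii) pins $p_\infty=f'(q_\infty)$, placing the limit on the stable branch of $\Lambda_\infty$; when $q_\infty=0$, monotonicity of $f_N'$ (from convexity of $f_N$) sandwiched between the values $f_N'(\pm\delta)\to f'(\pm\delta)$ forces $p_\infty$ into the closed interval $[f'(0^-),f'(0^+)]$, placing the limit on the predicted vertical segment at $q=0$, $z=f(0)$. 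In the opposite direction: every point of the smooth stable branch over $I\setminus\{0\}$ is reached by pointwise convergence from (i)--(ii); every $p$ on the segment is reached by applying the intermediate value theorem to the continuous $f_N'$, using $f_N'(0)=0$ (from the symmetry $f_N(-q)=f_N(q)$) to find $q_N$ with $f_N'(q_N)=p$, and then checking $q_N\to 0$ and $f_N(q_N)\to f(0)$ from (i) together with the fact that outside any fixed neighborhood of $0$ the values of $f_N'$ eventually avoid $(f'(0^-),f'(0^+))$.

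The main obstacle is the precise Hausdorff analysis near $q=0$: one has to simultaneously (a) trace out the entire vertical segment as a Hausdorff limit of points $(f_N'(q_N),q_N,f_N(q_N))$ along appropriate sequences $q_N\to 0$, and (b) rule out spurious accumulation points at $q=0$ whose $p$-coordinate lies outside that segment. Both tasks rely on combining the convexity-driven monotonicity of $f_N'$ with the quantitative rate at which the graph of $f_N'$ develops an ever-steeper transition across $q=0$ as $N\to\infty$; the remaining ingredients (Laplace's method, convexity upgrade of pointwise to uniform convergence, and convex-analytic derivative convergence) are fairly standard.
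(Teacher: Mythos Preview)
Your argument is correct, and for parts (i) and (ii) it matches the paper's proof almost exactly: the paper obtains the pointwise limit $f_N\to f$ by quoting \cite[Theorem 2.8]{FV} rather than doing Laplace's method directly, and then upgrades to uniform convergence via the same convexity principle you use (citing \cite[Theorem 7.17]{RW}).

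For part (iii) the paper takes a different route. Instead of your hands-on verification of the two Hausdorff inclusions near $q=0$, the paper observes that the projection of the target set $L$ to the $(p,q)$-plane is precisely the graph of the subdifferential $\partial f$, and then invokes Attouch's theorem (via \cite[Theorem 12.35]{RW} and the auxiliary results there) to conclude that the graphs of $\partial f_N$ Hausdorff-converge to the graph of $\partial f$. This black-boxes exactly the step you identify as the main obstacle. Your direct approach, using the intermediate value theorem, the symmetry $f_N'(0)=0$, and the monotonicity sandwich $f_N'(-\delta)\le f_N'(q_N)\le f_N'(\delta)$, is more elementary and self-contained, at the cost of more case analysis; the paper's approach is cleaner but imports a nontrivial result from variational analysis. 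Both deliver the same conclusion.

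One small remark: in your Hausdorff argument you implicitly need that $\overline{\Lambda}_N\cap\pi_q^{-1}(I)$ stays in a compact region of $\R^3$ (so that ``every convergent subsequence has limit in $L$'' upgrades to ``$\sup_{x\in\overline{\Lambda}_N}d(x,L)\to 0$''). This follows from your monotonicity bounds on $f_N'$ together with the uniform convergence of $f_N$, so it is not a gap, but it is worth stating.
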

We refer to Fig.~4 and Fig.~5 for an illustration of the theorem.

\begin{figure}[!tbp]
  \centering
  \begin{minipage}[b]{0.45\textwidth}
    \includegraphics[width=0.7\textwidth]{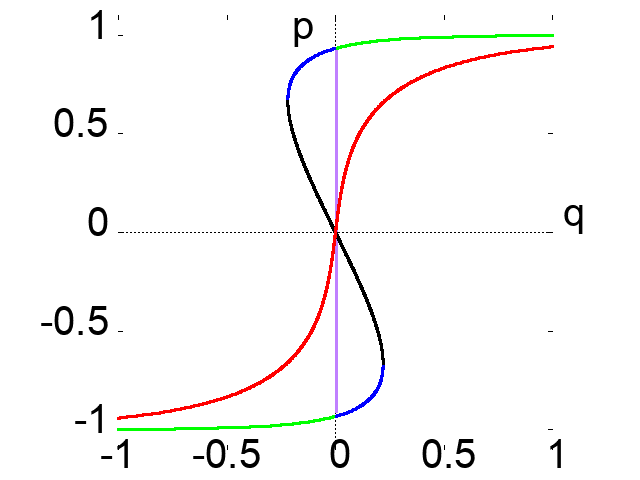}
    \caption{Thermodynamic limit: Lagrangian projection}
  \end{minipage}
  \hfill
  \begin{minipage}[b]{0.45\textwidth}
  \includegraphics[width=0.7\textwidth]{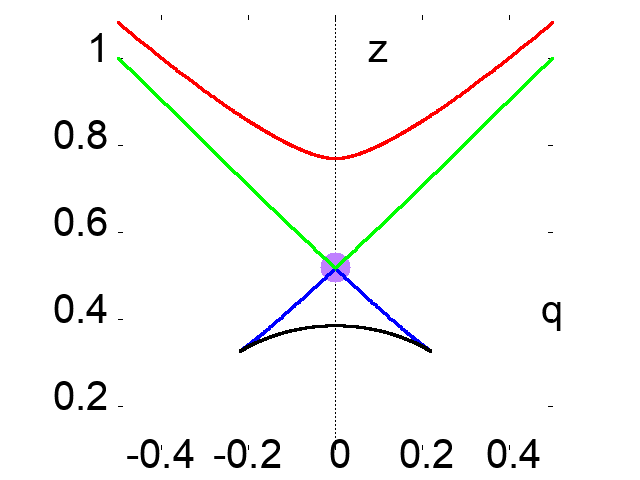}
    \caption{Thermodynamic limit: front projection}
  \end{minipage}
\end{figure}

\begin{proof}
By \cite[Theorem 2.8]{FV}, for each $m$,
$$\lim_{N \to +\infty}  f_N (m) = f(m)\;,$$
where $f$ is given by \eqref{eq-stablepart}.
Since $f_N \to f$ pointwise, and all the functions are convex, the convergence
is uniform on compact subsets \cite[Theorem 7.17]{RW}. This proves item (i).

Observe now that the graph of the {subdifferential} $\partial f$ is the projection of the piece-wise smooth Legendrian $L$ from item {(iii)}
to the $(p,q)$-plane. We refer to \cite[Chapter 8]{RW} for the notion of subdifferential (in \cite{RW} it is called {subgradient}).

Since $f_N\to f$ uniformly on compact sets and since all the functions are
convex and proper, by a theorem of Attouch, the graphs of the
subdifferentials $\partial f_N$, restricted to compact intervals of $\R$,
converge, as sets, to the graph of $\partial f$ in the Hausdorff sense. We
refer to Theorem 12.35 in \cite{RW} combined with some auxiliary material
related to various notions of convergence of set-valued mappings, namely
Example 4.13, Definition 5.32, and Proposition 7.15 {\it ibid.}
Since for a smooth
function $h$ the subdifferential is the usual differential, $\partial h = dh$,
and since $f_N$ are smooth everywhere and $f$ is smooth outside $0$, we get items
(ii) and (iii).
This completes the proof.
\end{proof}


\begin{rem} \label{rem-firstmetas}
 {\rm From the viewpoint of the equilibrium thermodynamics, only the stable part
of $\Lambda_\infty$
has a physical meaning.
However, non-equilibrium thermodynamics as manifested by the
Glauber dynamics highlights the importance of the metastable part, see
Section \ref{subsec-meta} below.
}
\end{rem}

\begin{rem}\label{rem-conv} {\rm Denote by $\pi_{pq}: \R^3 \to \R^2(p,q)$ the projection
of the thermodynamic phase space to the symplectic plane $(p,q)$. Then $\{\pi_{pq}(\overline{\Lambda}_N)\}$ is a sequence of Lagrangian submanifolds which is expected to be Cauchy with respect to (a version of) the $\gamma$-distance \cite{Vi}.
The ``singular (non-smooth) Lagrangian" $\pi_{pq} (L)$
is the $\gamma$-support of the corresponding element in the completion of the space of Lagrangian submanifolds. Let us also comment that the contact geometry of singular Lagrangians and Legendrians has been the subject of active study recently \cite{Vi, Sto, DRS}. The simple example of the thermodynamic limit of the Curie-Weiss magnet provides a setting when such singularities arise naturally. It is reasonable to believe that, more generally, thermodynamic limits of systems with metastable
equilibria lead to such singular Legendrians. Note also that the singularities disappear when the analytic continuation of the singular limiting
Legendrian is performed, as in adding the metastable blue part and the black part depicted in  Fig.~5 to the stable green part of $\Lambda_\infty$. It would be interesting to understand if the
singular Legendrians arising in the thermodynamic limits of other systems share this property
and how their analytically continued part relates to the metastable states.}
\end{rem}

\section{Nonequilibrium {evolution}}\label{sec:noneq}

\subsection{{Gradient flow in the ghost variables}}\label{subsec-noneq}

There exist several mathematical approaches to processes of the non-equilibrium thermodynamics. One of them is based on contact Hamiltonian dynamics where the relaxation processes
are modeled by asymptotic trajectories of contact Hamiltonian systems
(see e.g. \cite{BLN,EP,Goto-JMP2015,Goto,GLP,Grmela,Haslach,LimOh,V}).
In the present paper, {we view Gibbs equilibria as points on the Legendrian
submanifold for which the free energy serves as the generating function. In that context,
a natural relaxation to a Gibbs equilibrium is provided by
another} approach based on the gradient dynamics
associated to the generating functions of Legendrian submanifolds. We
refer to Section \ref{sec-cont} below for a comparison to the contact geometric approach.
{This gradient dynamics is as follows.}
 Let
$\Psi: \R^n \times E \to \R$ be a generating function of a Legendrian
submanifold $\Lambda=\Lambda_\Psi$ given by \eqref{eq-gf}. Fix a Riemannian
metric on  the space $E$ of ghost variables, and for a fixed $q\in\Rm^n$
consider the gradient flow equation
\begin{equation}
\label{eq-grad}
\dot{\xi} = \nabla \Psi(q,\xi).
\end{equation}
If $\Psi$ is proper and bounded from above, $\xi(t)$ necessarily converges to a critical point
of $\Psi$. Thus the trajectory
\begin{equation}
\label{eq-grad-dynam}
\Big(q, p(t):= \frac{\partial \Psi}{\partial q}(q,\xi(t)), z(t)= \Psi(q,\xi(t)\Big)
\end{equation}
converges to a point  on the Legendrian sub-manifold $\Lambda_\Psi$ as $t \to +\infty$.
{In the thermodynamic context, this limiting point is the Gibbs distribution to which the system
relaxes.}

In what follows we shall study equation \eqref{eq-grad} in the framework of
Sections \ref{sec:glossary} and \ref{sec:contact}. {That is, we  model
non-equilibrium thermodynamics as
such relaxation process converging to a point on the Legendrian submanifold of thermodynamic equilibria, as in~(\ref{nov2106}) or
(\ref{nov2108}).} As before, in the thermodynamic setting
the space $E$ of ghost variables
is the space $\mathcal{P}$ of positive probability densities, and the generating
function $\Psi(q,\rho)$ is given by the minus free energy (see \eqref{eq-Phi-rho}):
\begin{equation}\label{eq-PSi-rho}
\Psi: \R^n \times \mathcal{P} \to \R, \;\; (q,\rho) \mapsto \beta^{-1}S(\rho) - \int_M H(q,m) \rho(m)d\mu(m)\;.
\end{equation}
We recall that $\Psi$ is concave in $\rho$, and that we assume that the Hamiltonian $H$ is
linear in~$q$. For a fixed $q$, the maximum of $\Psi$ is attained at the Gibbs distribution
$\rho_G(q)$, and the function $\psi_H(q):= \Psi(q,\rho_G(q))$ is convex in $q$.

{One physical assumption we make is that when
the physical parameters jump from $q_0$ to $q_1$, the generating function (minus free energy)
changes from $\Psi(q_0,\rho)$ to $\Psi(q_1,\rho)$ but the density of states $\rho$ does not. Thus, the initial condition $\rho(0)=\rho_0$ for
the evolution equation~(\ref{eq-grad})
\be\label{24jul1206}
\dot\rho=\nabla\Psi(q_1,\rho),
\ee
is the Gibbs distribution for $\Psi(q_0,\rho)$: $\nabla\Psi(q_0,\rho_0)=0$.}

{Another point is that  one needs to specify the metric in which gradient is taken
in (\ref{24jul1206}). A natural class of metrics on probability densities are
the Wasserstein distances. For the moment we leave the specific choice of the metric open
but it will become concrete when we discuss the Curie-Weiss magnet below.
One should
keep in mind that the specific choice of the metric is also a modeling assumption.}

{Once the metric is specified}, we associate a curve $\gamma(t) = (p(t),q_0, z(t))$
in the thermodynamic phase
space to the solution $\rho(t)$ to (\ref{24jul1206}), where
\begin{equation}\label{eq-rel-fs}
p(t) = \frac{\partial \Psi}{\partial q} (q_0,\rho(t)), \;\; z(t) = \Psi(q_0,\rho(t)).
\end{equation}
Since $\rho(t)$ converges to the Gibbs distribution $\rho_G(q_0)$ as $t \to +\infty$,
the curve $\gamma(t)$ converges to a point $(p,q_0,z)$ on the  Legendrian submanifold (cf. \eqref{nov2108})
\begin{equation}\label{eq-lambda-second}
\Lambda = \Big\{ z= \psi_H(q),~ p = \frac{\partial \psi_H(q)}{\partial q}(q)\Big\}\;.
\end{equation}
This general strategy is related to the version of the GENERIC approach in~\cite{GrO}. In what follows we discuss the thermodynamic limit of such gradient dynamics in the contact geometric context.


\begin{rem}\label{rem-Floer} {\rm Certain interesting classes of Legendrian
submanifolds admit a canonical generating function given by the action functional
on the infinite dimensional space of paths in the phase space. The paths play the role
of the ``ghost variable" $\xi$. The gradient flow of the action functional gives rise
to a version of the Cauchy-Riemann equation which plays the central role in the Floer theory.
We refer to \cite{Oh} where the Lagrangian case is elaborated. It would be extremely interesting to explore a connection between the Cauchy-Riemann and the Fokker-Planck equation
\eqref{eq-FP-discr} appearing below for the finite Curie-Weiss model.}
\end{rem}
In the next sections, we discuss in detail the gradient flow of $\Psi(q,\cdot)$ with respect to
the Wasserstein metric on the space $\cP$ in the example of the Curie-Weiss model and the corresponding thermodynamic limit.

\subsection{{The basin of attraction}}\label{sec:basin}

{Before discussing the time-dependent dynamics,}
we define the basin of attraction $\mathcal{B}$ of a Legendrian $\Lambda$ as the set of all
initial conditions $\gamma(0){=(p_0,q_0,z)}$ that can be realized by a measure $\rho_0\in\cP$ {in (\ref{eq-rel-fs})}, that is,
\begin{equation}\label{eq-basin}
\mathcal{B} = \bigg\{ (p_0,q_0,z_0)
 \in \J^1\R^n \; \Big{|}   \; \exists \rho_0 \in \mathcal{P} \;:\; p_0= \frac{\partial \Psi}{\partial q}(q_0,\rho_0),  \; z= \Psi(q_0,\rho_0)\bigg\}.
 \end{equation}

\begin{rem}\label{rem-observable} {\rm In our setting, the thermodynamic phase space $\J^1\R^n$ is shared by a variety of thermodynamic systems with the same collection of external variables $q \in \R^n$, but with different Hamiltonians and temperatures. The points of the thermodynamic
phase space correspond to macroscopic states of a system for different probability densities on the space of microstates.
With this terminology, one can interpret the basin of attraction of a system as the subset of the thermodynamic phase space which consists of its macroscopic states.}
\end{rem}

The next result provides a constraint on the basin of attraction.
We tacitly assume that the function $\psi_H(q)$ is strictly convex,
and write $\widehat{\psi}$ for the Legendre transform of a function~$\psi$:
\begin{equation}\label{eq-Legendre}
\widehat{\psi} (p) = pq_*-\psi(q_*)\;,\;\;\text{where}\;\; \frac{\partial \psi}{\partial q} (q_*) = p.
\end{equation}

 \begin{prop}
 \label{prop-basin}
 \begin{equation}\label{eq-basin-3}
 \mathcal{B} \subset \bigg\{ (p,q,z)
 \in \J^1\R^n \; \Big{|}   \; z \leq pq - \widehat{\psi_H}(p)\bigg\}\;.
\end{equation}
\end{prop}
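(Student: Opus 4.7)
The plan is to exploit the linearity of $H(q,m)$ in $q$, which makes the generating function $\Psi(q,\rho)$ affine in $q$ for each fixed $\rho$, combined with the fact that $\psi_H(q)$ is the pointwise maximum of $\Psi(q,\rho)$ over $\rho \in \mathcal{P}$.

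First I would take an arbitrary point $(p_0,q_0,z_0) \in \mathcal{B}$ and unpack the definition \eqref{eq-basin}: there exists $\rho_0 \in \mathcal{P}$ with $p_0 = \frac{\partial \Psi}{\partial q}(q_0,\rho_0)$ and $z_0 = \Psi(q_0,\rho_0)$. Since $H$ is linear in $q$, the function $q \mapsto \Psi(q,\rho_0) = \beta^{-1}S(\rho_0) - \int H(q,m)\rho_0 d\mu$ is affine in $q$. Therefore its first-order Taylor expansion at $q_0$ is exact:
\begin{equation*}
\Psi(q,\rho_0) = z_0 + p_0 (q - q_0) \quad \text{for all } q \in \R^n.
\end{equation*}

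Next I would use the definition $\psi_H(q) = \Psi(q,\rho_G(q)) = \max_{\rho \in \mathcal{P}} \Psi(q,\rho)$, which gives the pointwise inequality $\Psi(q,\rho_0) \leq \psi_H(q)$. Plugging in the explicit affine formula yields
\begin{equation*}
z_0 + p_0(q-q_0) \leq \psi_H(q) \quad \text{for every } q,
\end{equation*}
equivalently, $z_0 \leq p_0 q_0 + \bigl(\psi_H(q) - p_0 q\bigr)$. Taking the infimum over $q$, which is the same as subtracting $\sup_q (p_0 q - \psi_H(q))$, and recognizing the latter as the Legendre transform $\widehat{\psi_H}(p_0)$ (which coincides with the definition \eqref{eq-Legendre} under the strict convexity assumption on $\psi_H$), we arrive at $z_0 \leq p_0 q_0 - \widehat{\psi_H}(p_0)$, which is the claimed inclusion.

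There is no real obstacle: the argument is essentially a Young-type inequality dressed up in the thermodynamic notation. The only point that deserves a line of care is the identification of $\sup_q (p_0 q - \psi_H(q))$ with $\widehat{\psi_H}(p_0)$ as given by \eqref{eq-Legendre} — this uses strict convexity of $\psi_H$ so that the supremum is attained at the unique $q_*$ satisfying $\psi_H'(q_*) = p_0$ (which the authors tacitly assume just before the statement). Everything else is a one-line consequence of the linearity of $H$ in $q$ and the variational characterization of the Gibbs state.
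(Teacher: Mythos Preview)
Your argument is correct and is essentially a clean Fenchel--Young inequality: you fix $\rho_0$, let $q$ vary, use affineness of $q\mapsto\Psi(q,\rho_0)$ together with $\Psi(\cdot,\rho_0)\le\psi_H$, and then take the supremum defining $\widehat{\psi_H}$. The paper arrives at the same conclusion by a dual organization: it fixes $(p,q)$, introduces the constrained set $\mathcal{P}(p)=\{\rho:\int\omega\rho\,d\mu=p\}$, and picks the specific $q_*$ with $\psi_H'(q_*)=p$; then it observes that the Gibbs state $\rho_G(q_*)$ lies in $\mathcal{P}(p)$ and hence
\[
\max_{\rho\in\mathcal{P}(p)}\Psi(q,\rho)=\psi_H(q_*)+(q-q_*)p=pq-\widehat{\psi_H}(p),
\]
an \emph{equality}, not just the inequality you need. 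Your route is shorter for proving the proposition as stated; the paper's route has the advantage that the exact identification of $\max_{\rho\in\mathcal{P}(p)}\Psi(q,\rho)$ is reused later (in the description of the basin in Proposition~\ref{prop-basin-descr}, where the boundary function $\alpha^-_{N,q_0}$ is precisely this constrained maximum).
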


Before proving the proposition, we need the following observation.
Put $$\omega(m) = -\frac{\partial H}{\partial q}(q,m),$$
and note that $\omega$ does not depend on $q$ since $H$ is linear in $q$.
Thus, if $p_0=\partial\Psi/\partial q (q_0,\rho_0)$, then
\begin{equation}\label{eq-p0}
p_0= \frac{\partial\Psi}{\partial q} (q_0,\rho_0) = -\int_M \frac{\partial H}{\partial q} (q_0, m) \rho_0 (m) d\mu (m) = \int_M \omega(m)\rho_0 (m) d\mu (m).
\end{equation}
Write $\mathcal{P}(p)$ for the set of all $\rho_{{0}} \in \mathcal{P}$
satisfying \eqref{eq-p0} with $p_0=p$. Then
\begin{equation}\label{eq-basin-1}
\mathcal{B} \subset \bigg\{ (p,q,z)
 \in \J^1\R^n  \; \Big{|}   \;  z \leq \max_{\rho \in  \mathcal{P}(p)} \Psi(q,\rho)\bigg\}\;.
 \end{equation}
Indeed, the coordinate $z$ stands for minus free energy, which is equal
 to $-\Phi(q,\rho) = \Psi(q,\rho)$.

\medskip
\noindent{\it Proof of Proposition \ref{prop-basin}:}
Fix $p,q$  and choose  $q_*$ as in  \eqref{eq-Legendre} with $\psi=\psi_H$.
The Gibbs distribution $\rho_*:= \rho_G(q_*)$ satisfies \eqref{eq-p0} with $q_0=q_*$
because
$$p = \frac{d\psi_H}{dq} (q_*) = \frac{\partial \Psi}{\partial q} (q_*, \rho_G (q_*)),$$
since $\partial\Psi/\partial \rho (q_*,\rho_G (q_*)) =0$.
Furthermore,
$$\Psi(q_*,\rho) \leq  \Psi(q_*,\rho_*) = \psi_H(q_*),\;\;\forall \rho \in \mathcal{P}\;,$$
because $\rho_*$ is the maximum of $\Psi (q_*,\cdot)$.
Since $H$ is linear in $q$ and, accordingly, so is the only term in $\Psi$ depending on $q$ (see \eqref{eq-PSi-rho}),
$$\max_{\rho \in  \mathcal{P}(p)} \Psi(q,\rho) = \max_{\rho \in  \mathcal{P}(p)} \Psi(q_*,\rho) + (q-q_*)p=  \psi_H(q_*) +  (q-q_*)p =$$
$$= p q_* - \widehat{\psi_H} (p) + (q - q_*) p = p q - \widehat{\psi_H} (p).$$
Together with \eqref{eq-basin-1}, this yields the proposition.
\qed

Note that the equilibrium Legendrian $\Lambda$ lies in $\mathcal{B}$, and it also
lies in the hypersurface
\begin{equation}\label{eq-hyp}
\mathcal{H} := \bigg\{ (p,q,z)
\in \J^1\R^n  \; \Big{|}   \;  z = pq -
 \widehat{\psi_H}
 (p)\bigg\}\;.
\end{equation}
Thus $\Lambda \subset \partial \mathcal{B}$, i.e.,  the equilibrium submanifold
lies on the boundary of the basin of attraction.

\subsection{Relaxation of the Curie-Weiss magnet, and contact chords} \color{black}
\label{subsec-chords}
A typical example one may have mind for the discussion above is as follows.
Assume that the temperature of the Curie-Weiss magnet, with the interaction parameter $b>0$
is instantly changed from $T_0 =1/\beta_0$ to $T_1 = 1/\beta_1$, and simultaneously
a source of constant magnetic field $a > 0$ is switched on, so that the external magnetic field $q$ changes to $q+a$. Suppose that before the change the magnet was in a
stable equilibrium state given by the Gibbs distribution  associated to
$b,\beta_0,q$. The issue is how one may describe the relaxation of the magnet towards the
Gibbs distribution associated to
$b,\beta_1, q+a$, and how that relaxation behaves in the thermodynamic limit $N\to+\infty$.
Assume that $b\beta_i > 1$ for $i\in\{0,1\}$, and denote
  by~$\rho_0(q)$ (resp., $\rho_1(q)$) and by $\mathcal{F}_0(q,\rho)$ (resp., $\mathcal{F}_1(q,\rho)$)
the Gibbs distribution and minus free energy associated to $b,\beta_0,q$ (resp., $b,\beta_1, q+a$).
\color{black}
In order to solve the relaxation problem  by the methods of this paper,
we apply the gradient flow equation associated with
 $\mathcal{F}_1$ \color{black}  with the initial condition given by the Gibbs distribution   $\rho_0$. \color{black} As we will see, for a finite $N$,
this gradient evolution corresponds to a discretization of a Fokker-Planck equation.
The Fokker-Planck evolution, as $t \to +\infty$, brings $\rho_0(q)$ to $\rho_1(q)$.
Because the Curie-Weiss Hamiltonian is linear in $q$,  geometrically, the instant temperature jump can be described as a perturbation of the stable
part of the equilibrium Legendrian submanifold associated to $\beta_0$ which keeps $(p,q)$-coordinates
and shifts the (minus) equilibrium energy $z = \mathcal{F}_0(q,\rho_0)$
to $z' = \mathcal{F}_1(q,\rho_0)$. Let us emphasize that the perturbed submanifold is
non-Legendrian, and
that it manifestly lies in the basin of attraction $\mathcal{B}(\beta_1)$.
Similar problems, albeit from a different perspective, were considered in \cite{EK,KH}.

Interestingly, in the thermodynamic limit $N \to +\infty$ this description
of the relaxation is closely related to contact geometry. We assume that
prior to passing to the limit we made the rescaling by factor $1/N$ as in Section \ref{subsec-limit}. Then (assuming $q \neq 0$, $q+a \neq 0$),
the Gibbs distributions for the original system in the thermodynamic limit is given by $\rho_0= \delta(p- p_0(q))$
where
\begin{equation}\label{eq-original-sys}
q+p_0(q) = \beta_0^{-1} \text{arctanh} \left(p_0(q)\right),
\end{equation}
and for the terminal system by
$\rho_1= \delta(p- p_1(q))$,
where
\begin{equation}\label{eq-terminal-sys}
a+q+p_1(q) = \beta_1^{-1} \text{arctanh} \left(p_1(q)\right).
\end{equation}

Assume now that for some $q=q_*$,
\begin{equation} \label{eq-chord} p_0(q_*) = p_1(q_*).
\end{equation}
Geometrically this corresponds to an intersection of the Lagrangian projections
(i.e., projections to the $(p,q)$-plane) of the (rescaled) equilibrium Legendrians of the original and of the terminal system.
The latter is the Curie-Weiss Legendrian corresponding to $b,\beta_1$ shifted by $a$ along the $q$-axis. \color{black}
In purely contact terms, such an intersection point
corresponds to the Reeb chord, i.e., a segment parallel to $z$-axis joining the two
Legendrians. The length of the chord corresponds to the jump of the free energy.
Note that such chords between a pair of Legendrian submanifolds are an object of interest in contact topology and dynamics.
In particular, there exist powerful methods enabling one
to detect such chords in a number of interesting situations, see e.g. \cite{EP} and references therein.

Back to our relaxation story, we get that at $q=q_*$ and, once again, this is possible only in the thermodynamic limit,
the Gibbs distributions of both systems coincide.  Thus  the original system jumps
to the equilibrium point of the terminal system, and the gradient dynamics keeps it fixed.
Therefore, {Reeb chords connecting stable equilibria correspond to instant relaxation processes when the system jumps from one endpoint of the chord to the
other one, and stays there forever.
Note that without the stability assumption we cannot approximate the case of finite but large $N$ by the thermodynamic limit.

It remains to check  the existence of such a chord, that is, of a solution of \eqref{eq-chord}.
Assume that $T_1 > T_0$ and
that $a$ is large enough.
Write $p_0(q)=p_1(q) := p$, $T_i = \beta_i^{-1}$,  and subtract \eqref{eq-original-sys} from  \eqref{eq-terminal-sys}. We readily get
\begin{equation} \label{eq-p-chord}
p = \tanh \frac{a}{T_1-T_0}\;,
\end{equation}
and thus $q_*$ is uniquely defined. Now turn to the analysis of the stability.
Since $a>0$, $T_1 > T_0$ we have $p >0$.
Denote by $c_i$ the positive solution of the equation
$$bx= T_i \text{arctanh}(x)\;.$$
Then $1 > x_1 > x_0$, and the desired stability assumption would follow from
\begin{equation} \label{eq-p1-chord}
p = \tanh \frac{a}{T_1-T_0} > x_1\;.
\end{equation}
This holds true if $a$ is sufficiently large, and the claim follows.

\subsection{The Curie-Weiss basin of attraction}\label{subsec-basin}

In this section, we describe (up to a small error tending to $0$ as $N\to +\infty$) the
basin of attraction $\mathcal{B}_N$ of the Legendrian submanifold
$\overline{\Lambda}_N$ of the Curie-Weiss model with $N$ spins, see
\eqref{eq-barlambdaN}.
We write $o(1), O(1)$ with respect to $N\to +\infty$ below.

We start with the following preparations.  Recall from
Sections~\ref{sec:glossary}, \ref{finite-CW} that the minus free energy of
the Curie-Weiss model with $N$ spins, after the rescaling by factor $1/N$, is
given by
\begin{equation}\label{eq-psiN}
\Psi_N(q,\rho) = \Psi_N(0,\rho) + pq = \frac{1}{N}\beta^{-1}S(\rho) - \int_{M_N} F_{b,\beta}(0,m) \rho(m) d\mu(m) + pq + o(1)\;,
\end{equation}
where
\begin{equation}
\label{eq-p-mean-spin}
p = \int_{M_N} m\rho(m)d\mu(m)\;.
\end{equation}

Consider the function $G^+_{b,\beta}(p)$, $p \in [-1,1]$,  which is obtained from
$-F_{b,\beta}(0,p)$ by applying twice the Legendre transform and, at the end,
changing the sign. This is the concave  envelope of $F_{b,\beta}$, i.e. the
smallest concave function majorizing $F_{b,\beta}$. Let  $G^-_{b,\beta}(p)$
be obtained from the function $F_{b,\beta}(0,p)$ by applying twice the
Legendre transform.  This is the convex envelope of $F_{b,\beta}(0,p)$, i.e.
the greatest convex function majorized by $F_{b,\beta}(0,p)$.

\begin{figure}[htb]
\centering
\includegraphics[width=0.6\textwidth]{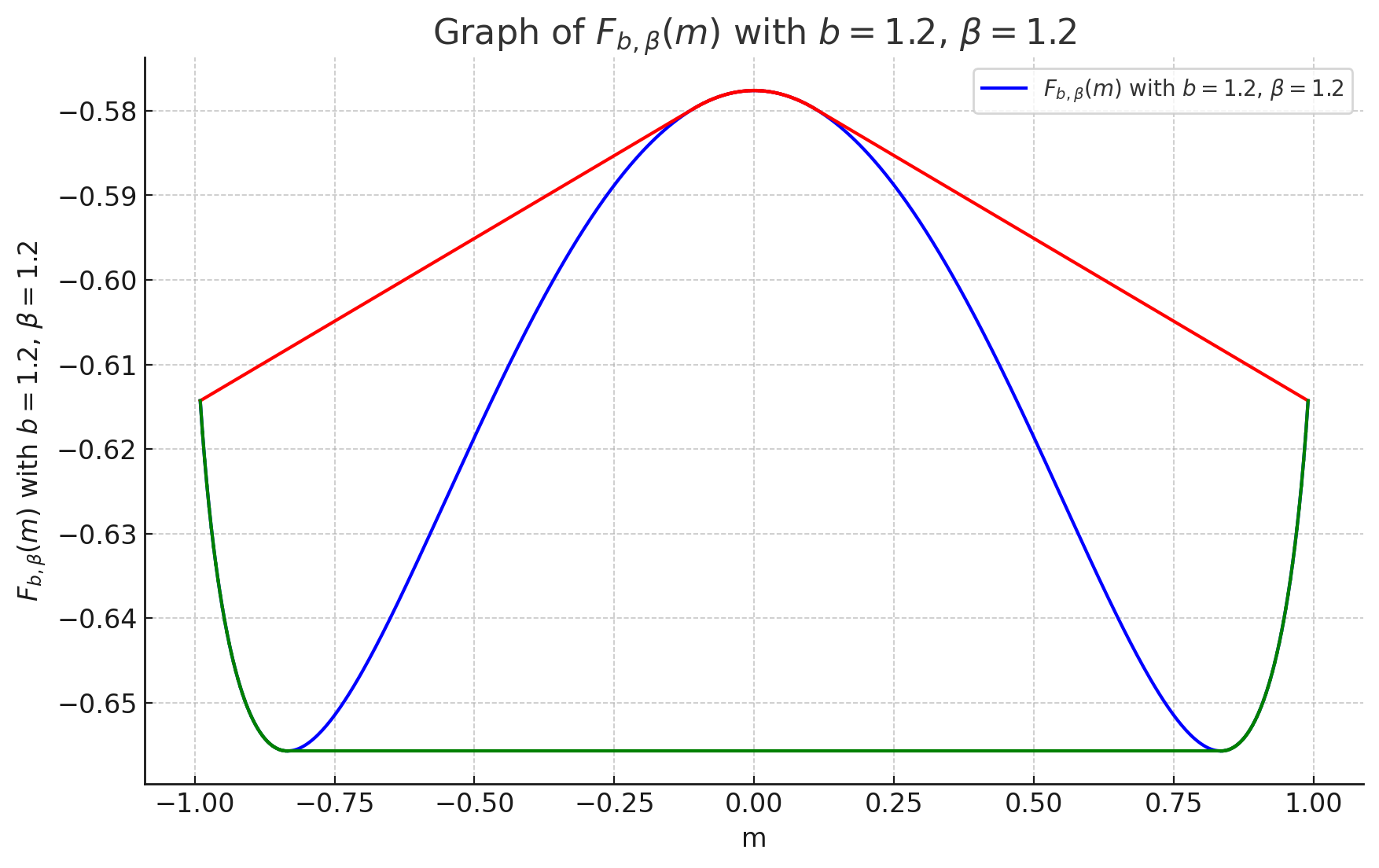}
\caption{Function $F_{b,\beta}$ and its envelopes}
\end{figure}

It is easy to see that when $b\beta \in (1, 2\ln 2)$ the concave envelope
$G^+_{b,\beta}$ is constant: $G^+_{b,\beta}(p) = -b/2$ for all $p$.
If $b\beta > 2\ln 2$, the function $G^+_{b,\beta}$ is more complicated, see Fig.~6,
where it appears in red. The convex envelope $G^-_{b,\beta}$ is plotted in green in the same figure.

Put
$$\alpha_{N,q_0}^{-} (p)  :=  pq_0 - \sup_{\rho \in \mathcal{P}(p)} \Psi_N(q_0,\rho)
= - \sup_{\rho \in \mathcal{P}(p)}\Psi_N(0,\rho)\;,$$and
$$\alpha_{N,q_0}^{+} (p) := 
 pq_0 - \inf_{\rho \in \mathcal{P}(p)} \Psi_N(q_0,\rho) = - \inf_{\rho \in
\mathcal{P}(p)}\Psi_N(0,\rho)\;,$$ where, as in
Section~\ref{subsec-noneq}, $\mathcal{P}(p)$ denotes the set of measures
$\rho\in\mathcal{P}$ satisfying \eqref{eq-p-mean-spin}.
\begin{prop}
\label{prop-basin-descr} The basin of  attraction $\mathcal{B}_N$
is given by
\begin{equation}\label{eq-basin-4}
\mathcal{B}_N =  \bigg\{ (p_0,q_0,z_0)
 \in \J^1\R^3 \; \Big{|}   \; q_0 \in \R, p_0 \in (-1,1),  
\alpha_{N,q_0}^-(p_0) < p_0q_0-z_0 
 \leq  \alpha_{N,q_0}^+(p_0)  \bigg\}\;,
\end{equation}
where
\begin{equation}\label{eq-alphaminus}
\alpha_{N,q_0}^-(p_0) = G^-_{b,\beta}(p_0)+
r^-_{N,q_0} (p_0)\;,
\end{equation}
and
\begin{equation}\label{eq-alphaplus}
\alpha_{N,q_0}^+(p_0) = G^+_{b,\beta}(p_0) +
 r^+_{N,q_0} (p_0)\;,
\end{equation}
where $r^-_{N,q_0} (p_0), r^+_{N,q_0} (p_0)\to 0$, for fixed $p_0,q_0$, as
$N\to +\infty$.
\end{prop}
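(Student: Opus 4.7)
The plan is to first reduce the description of $\mathcal{B}_N$ to the range of the map $\rho\mapsto -\Psi_N(0,\rho)$ on the constrained slice $\mathcal{P}_N(p_0)$ of densities with mean $p_0$, and then to match the two endpoints of that range with the convex and concave envelopes $G^\pm_{b,\beta}$ in the thermodynamic limit. By the linearity of $H_N$ in $q$ already reflected in~\eqref{eq-psiN}, we have $\partial\Psi_N/\partial q(q_0,\rho)=\int_{M_N}m\,\rho\,d\mu_N$, so $(p_0,q_0,z_0)\in\mathcal{B}_N$ is equivalent to the existence of $\rho_0\in\mathcal{P}_N(p_0)$ satisfying $p_0q_0-z_0=-\Psi_N(0,\rho_0)+o(1)$, with the $o(1)$ error absorbed into $r^\pm_{N,q_0}$.

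Next I would describe the image of the continuous, strictly convex function $-\Psi_N(0,\cdot)$ on the open convex polytope $\mathcal{P}_N(p_0)$. Strict concavity of the entropy $S$ and the blow-up $\partial S/\partial\rho_i\to+\infty$ as $\rho_i\to 0^+$ confine the unique constrained maximizer of $\Psi_N(0,\cdot)$---a Gibbs-type state with prescribed mean $p_0$---to the interior, so $\sup_{\mathcal{P}_N(p_0)}\Psi_N(0,\cdot)$ is attained and equals $-\alpha^-_{N,q_0}(p_0)$. Conversely, the infimum of the concave $\Psi_N(0,\cdot)$ over the closed polytope $\overline{\mathcal{P}_N(p_0)}$ is realized only at extreme points---measures supported on at most two atoms of $M_N$ with mean $p_0$---and hence is not attained on the open slice; its value is $-\alpha^+_{N,q_0}(p_0)$. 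Path-connectedness of $\mathcal{P}_N(p_0)$ and continuity identify the image with the half-open interval bounded by $\alpha^-_{N,q_0}(p_0)$ and $\alpha^+_{N,q_0}(p_0)$, yielding~\eqref{eq-basin-4}.

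Finally, to identify these endpoints with the envelopes, two uniform estimates are needed: $|S(\rho)|\le C\log N$ on $\mathcal{P}_N$ gives $N^{-1}\beta^{-1}S(\rho)=o(1)$ uniformly in $\rho$, and by~\eqref{eq-remainder} one has $N^{-1}\beta^{-1}\int_{M_N}r_N\,\rho\,d\mu_N=o(1)$ uniformly as well (the endpoints $m=\pm 1$ carry only $\mu_N$-mass $O(1/N)$, and elsewhere $r_N=O(\log N)$). Consequently $\alpha^-_{N,q_0}(p_0)=\inf_{\rho\in\mathcal{P}_N(p_0)}\int_{M_N}F_{b,\beta}(0,m)\,\rho\,d\mu_N+o(1)$ and $\alpha^+_{N,q_0}(p_0)=\sup_{\rho\in\mathcal{P}_N(p_0)}\int_{M_N}F_{b,\beta}(0,m)\,\rho\,d\mu_N+o(1)$. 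By the elementary duality between moment problems and convex envelopes, the infimum (resp.\ supremum) of $\int F\,d\nu$ over probability measures $\nu$ on $[-1,1]$ with mean $p_0$ equals $G^-_{b,\beta}(p_0)$ (resp.\ $G^+_{b,\beta}(p_0)$); the extremal continuous measures (two-atom configurations for the convex envelope, Diracs for the concave one) can be approximated to order $o(1)$ by positive densities on $M_N$ with mean $p_0$, giving~\eqref{eq-alphaminus}--\eqref{eq-alphaplus}. The main subtlety I expect is this last step---quantitatively approximating the singular extremal continuous measures by densities in the open discrete slice $\mathcal{P}_N(p_0)$ while preserving the mean and controlling the discretization error from~\eqref{eq-remainder} near $m=\pm 1$---which is a routine application of the large-deviation bookkeeping already underlying~\eqref{eq-ham-red}.
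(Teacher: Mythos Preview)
Your proposal is correct. The derivation of the half-open interval description \eqref{eq-basin-4} matches the paper's argument essentially verbatim: concavity of $\Psi_N(0,\cdot)$ on the open convex slice $\mathcal{P}_N(p_0)$, the interior maximizer forced by the blow-up of $\partial S/\partial\rho_i$, and the boundary-only infimum.

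For the asymptotic identification of $\alpha^{\pm}_{N,q_0}$ with the envelopes $G^{\pm}_{b,\beta}$ you take a different, more direct route than the paper. After stripping off the entropy and the remainder uniformly (your bound $|S(\rho)|\le C\log N$ is in fact sharper than the paper's stated $S(\rho)=O(1)$, which is only the upper half of the estimate), you invoke the classical moment-problem identity
\[
\inf_{\nu:\,\langle m\rangle_\nu=p_0}\int F\,d\nu \;=\; G^{-}(p_0),\qquad
\sup_{\nu:\,\langle m\rangle_\nu=p_0}\int F\,d\nu \;=\; G^{+}(p_0),
\]
together with an $o(1)$ discretization from $[-1,1]$ to $M_N$. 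The paper instead identifies $\alpha^{-}_{N,q_0}(p_0)$ as the Legendre transform $\widehat{f_N}(p_0)$ (via the computation behind Proposition~\ref{prop-basin}), then uses Theorem~\ref{thm-limit} and the stability of Legendre transforms under uniform convergence of convex functions to get $\widehat{f_N}\to\widehat{f}=G^{-}_{b,\beta}$; the case of $\alpha^{+}$ is then handled by a symmetry argument, applying the already-established identity \eqref{eq-inf-vsp} with $F_{b,\beta}$ replaced by $-F_{b,\beta}$. Your approach is self-contained and avoids the detour through Theorem~\ref{thm-limit}; the paper's approach recycles prior results but leaves the symmetry step for $\alpha^{+}$ somewhat informal, since \eqref{eq-inf-vsp} was derived through the full Curie-Weiss machinery specific to $F_{b,\beta}$.

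One minor slip: your parenthetical ``two-atom configurations for the convex envelope, Diracs for the concave one'' is scrambled. For the double-well $F_{b,\beta}(0,\cdot)$ both envelopes are typically realized by two-atom measures in the nontrivial regions (e.g.\ for $G^{+}$ when $b\beta\in(1,2\ln 2)$ the optimal measure sits on $\{-1,+1\}$), while Diracs are optimal only where $F$ touches the relevant envelope. This does not affect your argument, since all that matters is that the extremal measures are finitely supported and hence approximable on $M_N$.
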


\begin{proof}
By definition, $p$ is the mean spin with respect to a positive probability measure, hence it always lies in $(-1,1)$. For fixed $p_0\in (-1,1)$ and $q_0\in\R$, the set
$$\mathcal{B}_N (p_0,q_0):= \{ z_0\in\R\ |\ (p_0,q_0,z_0)\in \mathcal{B}_N\} \subset\R$$
is exactly the set of the values of $\Psi_N (q_0,\cdot)$ on
$\mathcal{P} (p_0)$. Note that this set does contain its supremum, reached on the Gibbs {\sl positive} probability measure $\rho_G$ (see Section~\ref{sec:glossary}). At the same time it does {\sl not} contain its infimum, since $\Psi_N (q_0,\cdot)$ is a concave function and $\mathcal{P} (p_0)$ is an open subset of an affine space.
In other words, 
\[
z_0\in \mathcal{B}_N (p_0,q_0)
\Longleftrightarrow
\inf_{\rho \in \mathcal{P}(p_0)} \Psi_N(q_0,\rho) < z_0 \leq \sup_{\rho \in \mathcal{P}(p_0)} \Psi_N(q_0,\rho)\Longleftrightarrow
\]
\[
\Longleftrightarrow 
\alpha_{N,q_0}^-(p_0) < p_0q_0-z_0 \leq \alpha_{N,q_0}^+(p_0).
\] 
This readily yields \eqref{eq-basin-4}. It also yields
\begin{equation}
\label{eq-alphas-extrema}
\alpha_{N,q_0}^-(p_0) = \inf_{z_0\in \mathcal{B}_N (p_0,q_0)} (p_0q_0-z_0),\ \alpha_{N,q_0}^+ (p_0) = \sup_{z_0\in \mathcal{B}_N (p_0,q_0)} (p_0q_0-z_0).
\end{equation}

 Recall that the role of the function 
 $\psi_H$ 
 defined in Section \ref{subsec-noneq} is played 
 in the setting of this section
 by the 
 function 
 $f_N$ from formula \eqref{eq-barlambdaN}.
By using Theorem \ref{thm-limit} and rewriting formula \eqref{eq-stablepart}, we get 
 that, as $N\to +\infty$, the functions $f_N$ converge uniformly on compact sets to
 $$f(q) : =
 \max_m \left(-F_{b,\beta}(q,m)\right) =
 \max_m \left(mq-F_{b,\beta}(0,m)\right)\;,$$ which is the Legendre transform of  $F_{b,\beta}(0,\cdot)$.  
  Thus the Legendre transform $\widehat{f}$ of $f$ satisfies $\widehat{f} =  G^-_{b,\beta}$
and the Legendre transforms of $f_N$ converge to it pointwise as $N\to +\infty$
(see Theorem 11.34 combined with Theorem 7.17 in \cite{RW}).
Together with  \eqref{eq-alphas-extrema} and \eqref{eq-basin-3}, this yields \eqref{eq-alphaminus}.

Now let us prove the estimate for
 $\alpha_{N,q_0}^+(p_0)$.
Observe that $S(\rho) = O(1)$ (the maximum of the entropy is attained on the uniform
distribution with 
 $\rho(m) = N/(2N+2)$ for all $m$). 
 Thus \eqref{eq-alphaminus} yields
\begin{eqnarray}
\label{eq-inf-vsp}
- \sup_{\rho \in \mathcal{P}
(p_0)
}
\int_{M_N} -F_{b,\beta}(0,m) \rho(m) d\mu(m) = 
 - \sup_{\rho \in \mathcal{P}(p_0)}\Psi_N(0,\rho) + o(1) = \nonumber \\
 = \alpha_{N,q_0}^-(p_0)  + o(1) =
G^-_{b,\beta} (p_0) + r^-_{N,q_0} (p_0) +o(1)\;.
\end{eqnarray}
 Similarly, \eqref{eq-alphaplus} would follow from
$$
- \inf_{\rho \in \mathcal{P}
(p_0)
 } 
\int_{M_N} -F_{b,\beta}(0,m) \rho(m) d\mu(m) = G^+_{b,\beta} (p_0) +
 r^+_{N,q_0} (p_0),
$$
 where $r^+_{N,q_0} (p_0)\to 0$, for fixed $p_0,q_0$, as $N\to +\infty$.
Rewrite the latter equality as
\begin{equation}\label{eq-inf-vsp-1} - \sup_{\rho \in \mathcal{P} (p_0)
 }  \int_{M_N} F_{b,\beta}(0,m) \rho(m) d\mu(m) = -G^+_{b,\beta}
 (p_0) + r^+_{N,q_0} (p_0).
\end{equation}
 Formula \eqref{eq-inf-vsp-1} is nothing else but \eqref{eq-inf-vsp}, which we
just proved, applied to $-F_{b,\beta}$,
 with $r^+_{N,q_0} (p_0) = r^-_{N,q_0} (p_0) + o(1)$.
 Furthermore, the concave envelope
of $F_{b,\beta}$ is the convex envelope of $-F_{b,\beta}$ taken with the minus sign.
Thus, \eqref{eq-inf-vsp-1} follows from \eqref{eq-inf-vsp}, and the proposition follows.
\end{proof}

Let us comment on the limiting shape of the Curie-Weiss basin of attraction
in the thermodynamic limit $N \to +\infty$. Recall from \eqref{nov2004} that the Curie-Weiss equilibrium Legendrian (for $N=+\infty$) associated to
parameters $b,\beta$ is given by
\[
\Lambda_{b,\beta}:= \Lambda = \left\{
\frac{\partial F_{b,\beta}}{\partial p} (q,p)=0,~
~z= - F_{b,\beta}(q,p)\right\} \subset \J^1\R^1.
\]
A direct calculation (see e.g. formulas for $q(p)$ and $z(p)$ on p.2 and p.3 in \cite{GLP}) shows that for $(p_0,q_0,z_0)$ lying on
 $\Lambda_{b,\beta}$
 one has
\begin{equation}\label{eq-qofp}
q_0 = -bp_0 + \frac{1}{2\beta} \ln \frac{1+p_0}{1-p_0}\;,
\end{equation}
and
\begin{equation}\label{eq-zofp}
z_0 = \frac{1}{2\beta}\ln\frac{4}{1-p_0^2} - \frac{bp_0^2}{2}    \;,
\end{equation}
which readily yields
\begin{equation}\label{eq-zminpq}
p_0q_0-z_0 = F_{b,\beta}(0,p_0)\;.
\end{equation}
Denote by $W(\beta)$ the convex hull of the graph of the function $w=F_{b,\beta}(0,p)$
on $p \in (-1,1)$ lying in the $(p,w)$-plane.  By Proposition \ref{prop-basin-descr},
the limiting shape of the basins
 of attraction (as $N \to +\infty$)  equals
\begin{equation}\label{eq-limiting}
\mathcal{B}_\infty(\beta) =  \bigg\{ (p_0,q_0,z_0)
 \in \J^1\R^3 \; \Big{|}   \; q_0 \in \R, p_0 \in (-1,1),  p_0q_0-z_0 \in W(\beta)  \bigg\}\;.
\end{equation}

We conclude with an illustration of Proposition \ref{prop-basin-descr}. Fix  the interaction parameter $b$, and assume, for all values of $\beta$ considered below that $b\beta >1$.
Denote by $\Gamma(\beta)$ the stable part of the Curie-Weiss Legendrian submanifold
$\Lambda_{b,\beta}$. The next result shows, in the terminology of Remark \ref{rem-observable}, that the points of $\Gamma(\beta_0)$ appear as macroscopic states of the infinite Curie-Weiss model with the inverse temperature $\beta_1 > b^{-1}$ if and only if $\beta_1 \leq \beta_0$, i.e., its temperature is not lower than $\beta_0^{-1}$.

\begin{thm}\label{thm-phys}
\begin{itemize}
\item[{(i)}] If $\beta_0 > \beta_1$, then $\Gamma(\beta_0) \subset \mathcal{B}_\infty(\beta_1)$.
\item[{(ii)}] If $\beta_0 < \beta_1$, then $\Gamma(\beta_0) \cap \mathcal{B}_\infty(\beta_1) = \emptyset$.
    \end{itemize}
\end{thm}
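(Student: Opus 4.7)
My plan is to translate both conditions into comparisons involving the function $p\mapsto F_{b,\beta}(0,p)$ and its envelopes, and then exploit the explicit $\beta$-dependence. A point $(p_0,q_0,z_0)\in\Gamma(\beta_0)$ is, by definition, one for which $p_0$ globally minimizes $F_{b,\beta_0}(q_0,\cdot)=F_{b,\beta_0}(0,\cdot)-q_0(\cdot)$; by \eqref{eq-zminpq} this gives $p_0q_0-z_0=F_{b,\beta_0}(0,p_0)=G^-_{b,\beta_0}(p_0)$, and $p_0$ ranges over $(-1,-c_{\beta_0}]\cup[c_{\beta_0},1)$, where $\pm c_\beta$ are the nontrivial roots of $p=\tanh(b\beta p)$. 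By \eqref{eq-limiting}, membership $(p_0,q_0,z_0)\in\mathcal{B}_\infty(\beta_1)$ is equivalent to the two-sided inequality
\[ G^-_{b,\beta_1}(p_0)\leq F_{b,\beta_0}(0,p_0)\leq G^+_{b,\beta_1}(p_0). \]
The main algebraic input I will use is the separation identity
\[ F_{b,\beta_0}(0,p)-F_{b,\beta_1}(0,p)=\Big(\tfrac{1}{\beta_0}-\tfrac{1}{\beta_1}\Big)s(p),\qquad s(p):=\tfrac{1+p}{2}\ln\tfrac{1+p}{2}+\tfrac{1-p}{2}\ln\tfrac{1-p}{2}\leq 0, \]
with equality only at $p=\pm 1$. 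I will also rely on the symmetry $F_{b,\beta}(0,-p)=F_{b,\beta}(0,p)$ and the elementary fact that the convex envelope $G^-_{b,\beta}$ equals $F_{b,\beta}(0,\cdot)$ for $|p|\geq c_\beta$ and the constant $F^*_\beta:=F_{b,\beta}(0,c_\beta)$ for $|p|\leq c_\beta$, together with the monotonicity of $\beta\mapsto c_\beta$ (which follows by implicit differentiation at the fixed point using $F''(c_\beta)>0$).

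For part (i), $\beta_0>\beta_1$ gives $F_{b,\beta_0}(0,\cdot)\geq F_{b,\beta_1}(0,\cdot)$, so the lower inequality is immediate from $G^-_{b,\beta_1}\leq F_{b,\beta_1}\leq F_{b,\beta_0}$. For the upper inequality I plan to avoid any explicit description of $G^+_{b,\beta_1}$ by producing a cheap witness: the two-atom measure $\rho_*:=\tfrac{1+p_0}{2}\delta_1+\tfrac{1-p_0}{2}\delta_{-1}$ has mean $p_0$ and, since $F_{b,\beta_1}(0,\pm 1)=-b/2$, satisfies $\int F_{b,\beta_1}(0,m)\,d\rho_*=-b/2$, placing $(p_0,-b/2)\in W(\beta_1)$ and hence $G^+_{b,\beta_1}(p_0)\geq -b/2$. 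It will then suffice to check that $F_{b,\beta_0}(0,p_0)\leq -b/2$ on the stable range, which follows because $F_{b,\beta_0}(0,\cdot)$ has a single critical point on $[c_{\beta_0},1]$ (the local minimum at $c_{\beta_0}$) and attains $-b/2$ at the endpoint $p=1$, so it is monotone and bounded above by $-b/2$ on this interval.

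For part (ii) I intend to prove the stronger statement $F_{b,\beta_0}(0,p_0)<G^-_{b,\beta_1}(p_0)$, which already puts the representative point strictly below $W(\beta_1)$. Since $c_{\beta_0}<c_{\beta_1}$, I split the positive stable range into $p_0\in(c_{\beta_1},1)$ and $p_0\in[c_{\beta_0},c_{\beta_1}]$ (the negative branch being symmetric). On the first interval both convex envelopes coincide with the respective $F$'s, and the separation identity with $\beta_0<\beta_1$ delivers the strict inequality at once. On the second, $G^-_{b,\beta_1}(p_0)\equiv F^*_{\beta_1}$, and monotonicity of $F_{b,\beta_0}(0,\cdot)$ on $[c_{\beta_0},1]$ combined with the separation identity at $c_{\beta_1}$ gives the chain
\[ F_{b,\beta_0}(0,p_0)\leq F_{b,\beta_0}(0,c_{\beta_1})<F_{b,\beta_1}(0,c_{\beta_1})=F^*_{\beta_1}. \]
The step I expect to require the most care is the upper bound in (i): the natural comparison $F_{b,\beta_0}\geq F_{b,\beta_1}$ goes in the wrong direction for directly dominating $F_{b,\beta_0}$ by a concave envelope of $F_{b,\beta_1}$; what unlocks the argument is the universal endpoint value $F_{b,\beta}(0,\pm 1)=-b/2$, witnessed concretely by $\rho_*$.
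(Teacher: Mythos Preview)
Your proof is correct and follows essentially the same route as the paper's. Both arguments pass to the $(p,w)$-picture with $w=pq-z$, so that $\Gamma(\beta_0)$ becomes the graph of $F_{b,\beta_0}(0,\cdot)$ over $|p|\geq c_{\beta_0}$ and $\mathcal{B}_\infty(\beta_1)$ becomes the convex hull $W(\beta_1)$; both then rely on the same four ingredients (your separation identity is exactly the paper's fact~(2), your monotonicity on the stable branch is fact~(3), and the endpoint value $-b/2$ is fact~(4)). Your treatment is more explicit than the paper's---in particular the two-atom witness for the upper bound in (i) and the case split $p_0\gtrless c_{\beta_1}$ in (ii) spell out steps the paper leaves to the reader---but the strategy is the same.
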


\begin{figure}[htb]
\centering
\includegraphics[width=0.6\textwidth]{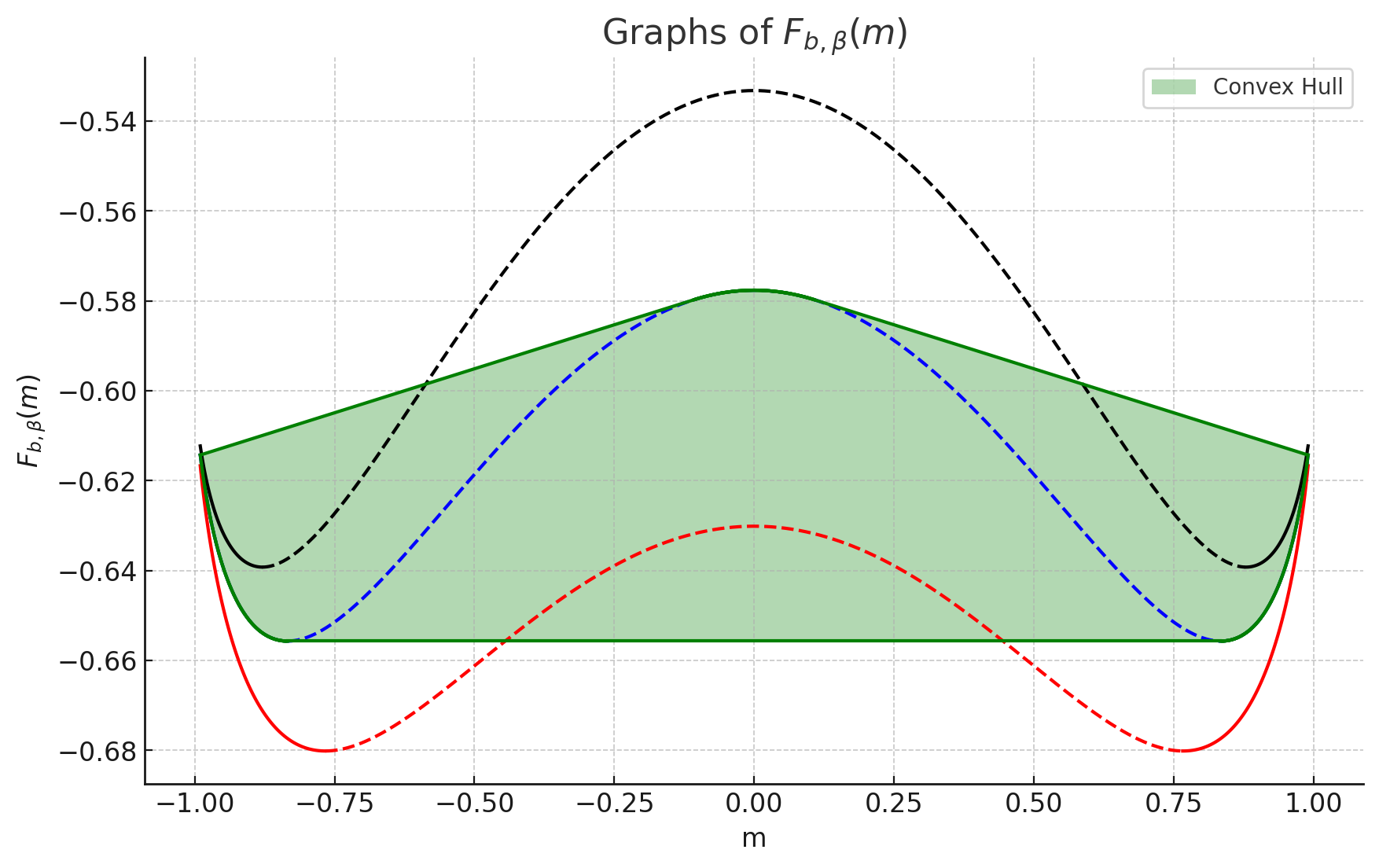}
\caption{Graphs of $F_{b, {\beta_i}}$ for $b=1.2$, $\beta_1 = 1.2$ (blue), $\beta_0 = 1.1$ (red) and $\beta_0= 1.3$ (black). Domain $W(1.2)$ (shaded green) vs. $V(1.3)$ (solid black lines) and $V(1.1)$ (solid red lines).}
\end{figure}

Some comments are in order. Denote by $p_-(\beta)$ and  $p_+(\beta)$ the negative and the positive solutions of the mean-field equation $p= \tanh(b\beta p)$ (cf.\eqref{nov2102}). They correspond
to the points of minimum of the function $F_{b,\beta}(0,p)$.   Take
the Curie-Weiss Legendrian  $\Lambda_{b,\beta}$
 and apply to it the transformation
$$(p,q,z) \mapsto (p,w),\;\; \text{where}\;\; w= pq-z\;.$$
By \eqref{eq-zminpq}, its image coincides with the graph of $w=F_{b,\beta}(0,p)$.
The image of the stable part $\Gamma(\beta)$ under this transformation corresponds
to the restriction of the graph to the intervals $I_-(\beta) = (-1,p_-(\beta))$ and $I_+(\beta) = (p_+(\beta),1)$. Denote this restriction by $V(\beta)$.
Thus, in order to prove Theorem \ref{thm-phys} we have to explore the mutual position of $V(\beta_0)$ and $W(\beta_1)$, where the latter set is defined just before \eqref{eq-limiting}.
We refer to Fig.~7 for an illustration. Let us emphasize that this figure is plotted in the
$(p,w)$-plane.

\medskip
\noindent
{\bf Proof of Theorem \ref{thm-phys}:}
Let us list the following elementary facts, leaving the verification to the reader:
\begin{itemize}
\item [{(1)}] the
 endpoint
$p_-(\beta)$
of the interval $I_-(\beta)$
(resp., the endpoint $p_+(\beta)$) of the interval $I_+(\beta)$) strictly increases (resp., strictly decreases)
as a function of $\beta$;
 \item [{(2)}]  for
 a
 fixed $p$, the function $F_{b,\beta}(0,p)$ strictly increases with $\beta$;
\item [{(3)}] for
 a
 fixed $\beta$, the function $F_{b,\beta}(0,p)$ strictly decreases on $I_-(\beta)$ and strictly increases on $I_+(\beta)$;
\item [{(4)}] $F_{b,\beta}(0,\pm 1) = - b/2$ for all $\beta$ with $b\beta >1$.
    \end{itemize}
It follows that for $\beta_0 > \beta_1$,
 we have
 $V(\beta_0) \subset W(\beta_1)$,
and for $\beta_0 < \beta_1$, $V(\beta_0)$ lies below $W(\beta_1)$, see Fig.~7 for an illustration.
In this figure $\beta_1 = 1.2$ (blue) and $\beta_0$ is taken as $1.1$ (red) and $1.3$ (black).
This yields the statement of the theorem.
\qed

%
%

\subsection{Gradient flow, Wasserstein metric, and Fokker-Planck equation {for a finite Curie-Weiss model}}
\label{subsec-grad-flow-wasserstein-fokker-planck}

We now discuss  the gradient flow equation \eqref{eq-grad} in the context of the finite Curie-Weiss model.
Recall from Section \ref{finite-CW} that the corresponding state space is the  set
$$M_N = \{-1,-1+2/N, \dots, 1-2/N, 1\} \subset [-1,1]$$
consisting of $N+1$ points. The ghost variables lie in the space $\cP_N$ of
the probability densities on $M_N$, see (\ref{nov2112}). In agreement with
the normalization of the function $f_N$ in the
definition~(\ref{eq-barlambdaN}) of the Legendrian manifold $\bar\Lambda_N$, and
anticipating the thermodynamic limit $N\to+\infty$, the generating function
$\Psi$ that appears in (\ref{eq-grad}) will be taken as
\be\label{24jul1502}
\overline{\Phi}_N:=-\frac{1}{N}\Phi.
\ee
Here, the free energy $\Phi$ is given by
\eqref{eq-Phi-rho} with  $H=H_N(q,m)$ as in
(\ref{eq-ham-red})-(\ref{eq-remainder}).

We now recall the definition of the (discrete version of) the Wasserstein distance on the space~$\cP_N$,
as developed by Maas in \cite{Maas} (cf. \cite{J,Lott}). \footnote{We refer to \cite{EFLS}
for a different application of the Maas' version of the Wasserstein distance to
the Curie-Weiss model. We thank Max Fathi for the reference.}
We will denote
$$m_j:= -1 + \frac{2(j-1)}{N},\; j=1,\dots, N+1\;.$$
Fix a smooth positive function
\begin{equation}\label{eq-u-choice}
u:[-1,1] \to (0,1/2),
\end{equation}
and consider a bistochastic tridiagonal symmetric $(N+1) \times (N+1)$ matrix $K_N= (k_{ij})$ with  the off-diagonal entries
\be\label{nov2114}
k_{i,i+1}= k_{i+1,i} = u(m_{i+1}), \;\; \forall i = 1, \dots, N\;,
\ee
and  the diagonal elements
\be\label{nov2116}
\begin{aligned}
&k_{11} = 1- u(m_2)\;,\\
&k_{ii} = 1- u(m_i)-u(m_{i+1}),\;\forall i=2,\dots, N\;,\\
&k_{N+1,N+1}= 1-u(m_{N+1})\;.
\end{aligned}
\ee
The matrix $K_N$ defines a reversible Markov chain with the stationary  probability measure
$$\pi(m_j) = \frac{1}{N+1}\;.$$
For $x,y > 0$ define the logarithmic mean by
$$\ell(x,y) := \frac{x-y}{\ln x - \ln y},\;\;\text{if}\;\; x \neq y\;,$$
and
$$\ell(x,x) := x\;.$$
For a density $\rho \in \mathcal{P}_N$ define the $(N+1)\times (N+1)$ matrix
$\hat{\rho}$ with the entries
$$\hat{\rho}_{ij} = \ell(\rho(m_i),\rho(m_j))\;.$$
We identify the tangent space $T_\rho \mathcal{P}_N$ with the space of vectors
\[
v \in \R^{N+1}_0=\Big\{v \in \R^{N+1}\Big|~\sum_iv_i =0\Big\}.
\]
Define the following operations:
a discrete gradient, taking $\psi \in \R^{N+1}$ to an $(N+1)\times(N+1)$ matrix $\nabla\psi$ with the entries
\begin{equation}\label{eq-disc-grad}
(\nabla \psi)_{ij}:= \frac{N}{2}\cdot(\psi_i-\psi_j)\;,
\end{equation}
and a discrete divergence, taking an anti-symmetric matrix $A = (a_{ij})$ to
a vector $ \Div A$ with the coordinates
\begin{equation}
\label{eq-disc-div}
\left(\Div A\right)_i = -\frac{N}{2}\cdot\sum_j k_{ij}a_{ij}\;.
\end{equation}
The scaling factor $N/2$ in \eqref{eq-disc-grad} and \eqref{eq-disc-div} is needed
in order to pass to the thermodynamic limit $N \to +\infty$ in the next section.

 It is straightforward to see that for each $v \in T_\rho \mathcal{P}_N=\R^{N+1}_0 $ there
is a unique $\phi_v \in \R^{N+1}_0 $
such that
\begin{equation}\label{eq-v-rho} v = - \Div (\hat{\rho}\odot\nabla\phi_v)\;.
\end{equation}
Here, for two square matrices $A$ and $B$ of the same size   we write
$A \odot B = (a_{ij}b_{ij})$ for their Hadamard product.

\begin{rem} \label{rem-Neumann} {\rm In the thermodynamic limit $N \to +\infty$
equation \eqref{eq-v-rho} becomes
\begin{equation}
\label{eq-v-rho-1}
v = - {\rm div}(\rho u^{1/2} \nabla \phi_v)\;,
\end{equation}
where the functions $v,\phi_v$ are defined on the segment $[-1,1]$ and ${\rm div}$ and $\nabla$ are understood with respect to the
Riemannian metric $u^{-1}dm^2$. Here $\rho u^{1/2}$ is the density of the measure $\rho dm$ with respect to the Riemannian measure
$u^{-1/2}dm$.
Formula \eqref{eq-v-rho-1}
has a counterpart on Riemannian manifolds (see formula (8) in  \cite{Otto}, cf. \cite{Lott}).
Furthermore, one can check that in the limit, the solution $\phi_v$ to \eqref{eq-v-rho-1}
should satisfy the Neumann boundary condition.
}
\end{rem}

With the above definitions, the discrete Wasserstein metric on $\mathcal{P}_N$ is given by the scalar product
\be\label{24jul1504}
(v,w)_W := (\nabla \phi_v, \hat{\rho}\odot\nabla \phi_w),
\ee
where $v,w\in T_\rho {\mathcal P}_N$ and
\be\label{24jul1506}
(A,B):=   \frac{1}{N}   \sum_{ij} k_{ij}a_{ij}b_{ij}\;,
\ee
and the lower index $W$ stands for ``Wasserstein".

A direct calculation based on \cite{Maas} (see Appendix in Section \ref{App-2}
for more details) shows that the gradient of the normalized free energy (\ref{24jul1502}) is
\begin{equation}\label{eq-grad-Wass}
\nabla_W \overline{\Phi}_N (\rho) =
\frac{N}{4\beta}(K_N-{\bf 1})\rho + \Div (\hat{\rho}\odot\nabla F_{b,\beta})+
N^{-1}\beta^{-1}\Div (\hat{\rho}\odot\nabla r_N)\;.
\end{equation}
Here, $\nabla_W$ is the gradient with respect to the Wasserstein metric, and
$F_{b,\beta}$ is the vector of the values of the function $F_{b,\beta}$ at
the $N+1$ points of the space $M_N$.  In the limit $N\to+\infty$, the first
term in the right side is responsible for the diffusion, and the second one
for the drift, respectively.
 The last term is a small remainder.

To summarize, the gradient flow of the free energy with respect to the
discrete Wasserstein metric  on $\cP_N$  is given by the discrete Fokker-Planck equation (cf. \cite{Maas})
\begin{equation} \label{eq-FP-discr}
\dot{\rho} =  \frac{N}{4\beta}(K_N-{\bf 1})\rho + \Div (\hat{\rho}\odot\nabla F_{b,\beta})+
N^{-1}\beta^{-1}\Div (\hat{\rho}\odot\nabla r_N)\;.
\end{equation}
The dynamics of (\ref{eq-FP-discr}) depends, of course, on the choice of the
function $u(m)$ that enters the definitions of the matrix $K_N$ and of the divergence operator $\Div$. We will make it
in the next section.

\subsection{Comparison to the Glauber dynamics}\label{sec-comp}

An important {standard} model of relaxation of the Curie-Weiss model, to which the gradient flow~(\ref{eq-FP-discr}) should be compared,
is the Glauber dynamics. 
It is generated by
the following  discrete time random walk on the set $M_N$, with a time step $\tau>0$.
The jumps by $2/N$ to the nearest neighbor on the right and on the left   happen with
the respective probabilities
$\tau N \Pi_+(m)$ and $\tau N \Pi_-(m)$ with
\begin{equation}\label{eq-trans}
\Pi_-(m) = c(1+m)(1-\theta(m-2/N)),\; \Pi_+(m) = c(1-m)(1+\theta(m+2/N)).
\end{equation}
 In addition, at each time step, the particle stays put with the probability
 \[
 1-\tau N \Pi_+(m)-\tau N \Pi_-(m).
 \]
 The parameter $c>0$ in (\ref{eq-trans}) is responsible for the rate of spin flips, and
\be\label{24jul1602}
\theta(m) = \tanh \beta(q+bm).
\ee
The object of interest is the probability  $\rho= \rho(m,t)$ of finding the system at the point $m \in M_N$ at time $t$. Its evolution is described by the following ``master equation":
\begin{equation}\begin{split}
\label{eq-evol-Gl-1}
\frac{\rho(m_i, t+\tau) - \rho(m_i,t)}{\tau}  & =  N\Pi_+(m_{i-1})\rho(m_{i-1},t) - N\left( \Pi_+(m_i)  + \Pi_-(m_i)\right)\rho(m_i,t) \\ & + N\Pi_-(m_{i+1}) \rho(m_{i+1},t)\;,  i=2, \dots, N,
\end{split}
\end{equation}
\begin{equation}
\label{eq-evol-Gl-left}
\frac{\rho(m_1, t+\tau) - \rho(m_1,t)}{\tau} = - N\Pi_+(m_1)\rho(m_1,t) + N\Pi_-(m_{2}) \rho(m_2,t),
\end{equation}
\begin{equation}
\label{eq-evol-Gl-right}
\frac{\rho(m_{N+1}, t+\tau) - \rho(m_{N+1},t)}{\tau} = - N\Pi_1(m_{N+1})\rho(m_{N+1},t) + N\Pi_+(m_{N}) \rho(m_N,t).
\end{equation}

Our objective is to compare this dynamics with the Fokker-Plank evolution given
by~\eqref{eq-FP-discr}. To this end, we make the following choice {of the Wasserstein metric on the space of probability densities on $M_N$}.

\medskip\noindent
{\bf Important choice:} Take the function $u$ entering the definition (\ref{nov2114})-(\ref{nov2116})
of the matrix $K_N$ as
\begin{equation}\label{eq-u-defin}
u(m):= c(1-m\theta(m)),
\end{equation}
with $\theta(m)$ defined in (\ref{24jul1602}).

We introduce a left-stochastic (i.e., entries in each column sum up to $1$) tri-diagonal $(N+1)\times (N+1)$ matrix $P_N = (\kappa_{ij})$
with
\[
\kappa_{i,i-1} = \Pi_+(m_{i-1}), \kappa_{i,i+1} = \Pi_-(m_{i+1})\;, \kappa_{ii} = 1-\Pi_-(m_i) - \Pi_+(m_i),
\]
and set $W_N = P_N-K_N$. With this notation, we pass to the continuous time limit $\tau\to 0$
in~\eqref{eq-evol-Gl-1},\eqref{eq-evol-Gl-left},\eqref{eq-evol-Gl-right},
and readily derive the following equation:
\begin{equation}\label{eq-Glauber-fin}
\frac{\partial \rho}{\partial t} = N(K_N- {\bf 1})\rho +NW_N\rho.
\end{equation}
Note that {\it the diffusion} term $N(K_N-{\bf 1})\rho$
in  \eqref{eq-Glauber-fin} coincides with the one in \eqref{eq-FP-discr} up to a numerical multiple.
{Let us mention that, when applied to a smooth function, the diffusion term in  \eqref{eq-FP-discr} and (\ref{eq-Glauber-fin})
is  formally of the order
 $O(1/N)$
 while the drift terms in these equations are of the order $O(1)$. Nevertheless, the diffusion
terms can not be neglected as the solutions to  \eqref{eq-FP-discr} and (\ref{eq-Glauber-fin})  are expected to have some oscillations
on the scales  $o(1)$ as $N\to+\infty$.}

Let us consider the respective {\it drift} terms in these two equations:
\[
\Div (\hat{\rho}\odot\nabla F_{b,\beta}) +
N^{-1}\beta^{-1}\Div (\hat{\rho}\odot\nabla r_N)
\]
in (\ref{eq-FP-discr}) and $N{W}_N\rho$ in (\ref{eq-Glauber-fin}).  Fix a
smooth function $\rho$ on $[-1,1]$ and consider its restriction to $M_N$:
\begin{equation}\label{eq-restr}
\rho_N = \rho|_{M_N} \in \R^{N+1}.
\end{equation}

\begin{thm}\label{thm-drifts} [Drift correspondence]
\begin{itemize}
\item [{(i)}] For every $m \in (-1,1)$,
\begin{equation}\label{eq-limit-drift-FP}
\Div (\hat{\rho}_N\odot\nabla F_{b,\beta})+
N^{-1}\beta^{-1}\Div (\hat{\rho}\odot\nabla r_N) =\left(uF'_{b,\beta}\rho(m)\right)'+ O({1}/{N}).
\end{equation}
\item [{(ii)}] For every $m \in (-1,1)$,
\begin{equation}\label{eq-limit-drift-Glaub}
NW_N \rho_N (m) =4c\left((m-\theta(m))\rho(m)\right)'+ O({1}/{N}).
\end{equation}
\item [{(iii)}] Consider the vector fields ${\rm Drift}_{FP}(m): = u(m)F'_{b,\beta}(m)$
and ${\rm Drift}_{G}(m) := 4c(m-\theta(m))$
corresponding to the drifts of the  Fokker-Planck evolution and the Glauber evolution, respectively, in the thermodynamic limit $N \to +\infty$. These fields are proportional with the factor being a smooth positive function, and thus, their non-parameterized trajectories
coincide.
\end{itemize}
\end{thm}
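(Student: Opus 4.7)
The plan is to treat (i) and (ii) by Taylor expansion around each interior lattice point $m_i \in M_N$, with lattice spacing $h := 2/N$, and then to derive (iii) from an explicit algebraic identity between the two drift fields.

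For (i), I would first unpack the definitions \eqref{eq-disc-grad}, \eqref{eq-disc-div}, and the tridiagonal structure of $K_N$ from \eqref{nov2114}, \eqref{nov2116} to obtain, at an interior site, the pointwise formula
\[
\Div(\hat\rho_N\odot\nabla F_{b,\beta})_i = \frac{N^2}{4}\bigl[G_{i+1/2} - G_{i-1/2}\bigr],
\]
where the edge quantity is $G_{i+1/2} := u(m_{i+1})\,\ell(\rho_i,\rho_{i+1})(F_{i+1}-F_i)$. The key analytic input is that the logarithmic mean satisfies $\ell(\rho(m),\rho(m+h)) = \rho(m+h/2) + O(h^2)$ for smooth positive $\rho$, which is a direct Taylor expansion. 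Combined with $F_{i+1}-F_i = hF'_{b,\beta}(m_{i+1/2}) + O(h^3)$ and $u(m_{i+1}) = u(m_{i+1/2}) + O(h)$, this gives $G_{i+1/2} = h\,(u\rho F'_{b,\beta})(m_{i+1/2}) + O(h^3)$, so the discrete second difference is $h^2\bigl(uF'_{b,\beta}\rho\bigr)'(m_i) + O(h^3)$, yielding the claimed main term after multiplication by $N^2/4 = 1/h^2$. For the $r_N$ correction, formula \eqref{eq-remainder} yields $r_N'$ bounded on compact subsets of $(-1,1)$, so the same expansion gives $\Div(\hat\rho\odot\nabla r_N) = O(1)$ and the $N^{-1}$ prefactor absorbs it into the $O(1/N)$ error.

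For (ii), the cleanest strategy is to write $W_N = (P_N - {\bf 1}) - (K_N - {\bf 1})$. A direct Taylor expansion shows that $N(K_N-{\bf 1})\rho_N|_i = \tfrac{4}{N}\bigl(u\rho'\bigr)'(m_i) + O(1/N^2)$, which is $O(1/N)$. For $N(P_N - {\bf 1})\rho_N|_i$, this is the action of the continuous-time Glauber generator on the density $\rho_N$; a standard Taylor expansion of the master equation \eqref{eq-evol-Gl-1}, using $\Pi_\pm(m) = c(1\mp m)(1\pm \theta(m)) + O(h)$, gives
\[
N(P_N - {\bf 1})\rho_N|_i = 4c\bigl((m-\theta(m))\rho(m)\bigr)'\bigl|_{m=m_i} + O(1/N),
\]
because the leading contribution $Nh \cdot (\Pi_-(m)-\Pi_+(m))' = 2\cdot(2c(m-\theta))'(\ldots)$ produces the claimed drift, while the diffusive $(Nh^2/2)$-term is $O(1/N)$. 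Combining the two pieces yields \eqref{eq-limit-drift-Glaub}.

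For (iii), I would use the explicit formula $F'_{b,\beta}(q,m) = -q-bm+\tfrac{1}{2\beta}\ln\tfrac{1+m}{1-m}$ together with $\tfrac{1+\theta}{1-\theta} = e^{2\beta(q+bm)}$ to rewrite
\[
F'_{b,\beta}(q,m) = \frac{1}{2\beta}\ln\frac{(1+m)(1-\theta(m))}{(1-m)(1+\theta(m))}.
\]
The substitution $x := \tfrac{1+m}{1-m}$, $y := \tfrac{1+\theta}{1-\theta}$ then produces the elementary identities $m-\theta = \tfrac{2(x-y)}{(x+1)(y+1)}$ and $1-m\theta = \tfrac{2(x+y)}{(x+1)(y+1)}$, from which
\[
\frac{{\rm Drift}_{FP}(m)}{{\rm Drift}_G(m)} = \frac{u(m)F'_{b,\beta}(q,m)}{4c(m-\theta(m))} = \frac{x+y}{8\beta\,\ell(x,y)},
\]
where $\ell$ denotes the logarithmic mean. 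Since $x,y$ are smooth and strictly positive functions of $m$ on $(-1,1)$ and $\ell$ is smooth and positive on $(0,\infty)^2$ (with $\ell(x,x)=x$), the ratio is smooth and positive; in fact it is $\geq 1/(4\beta)$ by the arithmetic/logarithmic-mean inequality.

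The main obstacle is the apparent $0/0$ at the critical points of $F_{b,\beta}(q,\cdot)$, where both drifts vanish simultaneously. The substitution $x,y$ and the reappearance of the logarithmic mean dissolves this singularity cleanly. An alternative, more laborious route is to invoke L'H\^opital at each critical point $m_*$ together with the identity $F''(m_*) = \tfrac{1-\theta'(m_*)}{\beta(1-m_*^2)}$, which produces exactly the positive proportionality factor $1/(4\beta)$ at the critical point and confirms smoothness via non-degeneracy.
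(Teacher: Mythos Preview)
Your arguments for (i) and (ii) follow the same strategy as the paper---Taylor expansion of the discrete operators at an interior site---with only cosmetic reorganization (you use midpoint quantities $G_{i\pm 1/2}$; the paper expands $I_1, I_2$ directly at site $i$). One small imprecision: with $u(m_{i+1}) = u(m_{i+1/2}) + O(h)$ you only get $G_{i+1/2} = h(u\rho F'_{b,\beta})(m_{i+1/2}) + O(h^2)$, not $O(h^3)$. The conclusion is still correct because the $O(h^2)$ correction is a smooth function of the midpoint (namely $\tfrac{h^2}{2}u'\rho F'_{b,\beta}$), so it cancels to $O(h^3)$ upon differencing $G_{i+1/2}-G_{i-1/2}$; you should say this explicitly. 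For (ii), your decomposition $W_N = (P_N-{\bf 1})-(K_N-{\bf 1})$ is a legitimate variant of the paper's direct computation of the entries of $W_N$.

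Your treatment of (iii) is genuinely different from, and more informative than, the paper's. The paper simply asserts that $m-\theta(m)$ and $F'_{b,\beta}(m)$ share the same zeroes and that their ratio $\mu$ is smooth and positive, referring to a figure; this leaves the smoothness of $\mu$ across the zeroes as something to be checked by the reader. Your substitution $x=\tfrac{1+m}{1-m}$, $y=\tfrac{1+\theta}{1-\theta}$ and the resulting closed form
\[
\frac{{\rm Drift}_{FP}}{{\rm Drift}_G} = \frac{x+y}{8\beta\,\ell(x,y)}
\]
make both smoothness and positivity manifest without any separate analysis at the critical points, and the arithmetic/logarithmic-mean inequality gives the explicit lower bound $1/(4\beta)$ as a bonus. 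This is a cleaner argument.
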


The proof is given in Section \ref{sec-appendix}.

\begin{rem}{\rm The vector field ${\rm Drift}_{G}(m)$ is nothing else but the
right hand side of the Glauber-Suzuki-Kubo ODE, see equation (4.3) in  \cite{Suzuki-Kubo}.
Thus, item (ii) of the theorem is a derivation of this ODE. We thank S.~Shlosman who brought
our attention to a possibility of such a derivation in the context of the Curie-Weiss model.}
\end{rem}

\begin{rem}\label{rem-eq-vs-evolution}
{\rm Let us note that Theorem~\ref{thm-drifts})(i), {together with the comment below
(\ref{eq-Glauber-fin}),} does not imply directly
that solutions to the discrete equation  (\ref{eq-FP-discr}) are well-approximated by the first-order continuous problem
\be
\pdr{\rho}{t}=(u(m)\rho(m)F'(m))',~~m\in[-1,1].
\ee
Moreover, one expects that this ``naive" approximation should fail in general. We believe that the ``correct" continuous approximation
to  (\ref{eq-FP-discr}) is the Fokker-Planck equation
\be\label{nov302}
\pdr{\rho}{t}=\farc{d}{N}(u(m)\rho'(m))'+(u(m)\rho(m)F'(m))',~~m\in[-1,1],
\ee
with a suitably chosen constant $d(\beta)>0$. If the drift $F'(m)$ were regular, then (\ref{nov302}) would
be
supplemented by the Robin zero flux boundary conditions
\be\label{nov304}
\farc{d}{N}\rho'(m)+F'(m)\rho(m)=0,~~\hbox{at $m=\pm 1$.}
\ee
In the present case, as can be seen directly from (\ref{eq-redCW}), while $F(m)$ is bounded on $[-1,1]$, the
derivative $F'(m)$ has a logarithmic singularity at $m=\pm 1$.  To remove it, one may consider the function
\be
\zeta(t,m)=\rho(t,m) e^{NF(m)/d}.
\ee
In light of \eqref{nov302} and \eqref{nov304} ,  it satisfies a divergence-form equation
\be\label{24jul3002}
\pdr{\zeta}{t}=\farc{d}{N}e^{NF(m)/d}(u(m)e^{-NF(m)/d}\zeta'(m))',~~m\in[-1,1],
\ee
with  coefficients that are uniformly bounded in $m\in[-1,1]$, and with the zero-flux boundary condition
\be
\zeta'(-1)=\zeta'(1)=0.
\ee
The latter translates into a Robin type boundary condition for $\rho(t,m)$ of the form
\be
\lim_{m\to\pm 1}\Big(\rho(m)e^{NF(m)/d}\Big)'=0.  
\ee

Even though the diffusion term in (\ref{nov302}) is small,  its solutions 
have derivatives that grow in~$N$, so the diffusion term cannot be dropped.
Roughly speaking,  the drift term governs the evolution of $\rho$  on a large (with $N$)
initial time interval, while the diffusion plays a role at large times as
$t\to +\infty$.
To avoid the lengthy technicalities, we leave the proof of the
continuous approximation of the solutions to the
discrete problem{s (\ref{eq-FP-discr}) and} (\ref{eq-Glauber-fin})  to a future work.}
\end{rem}

\subsection{A comment on metastability} \label{subsec-meta}

{Let us now briefly comment on the metastability of the Curie-Weiss model
in the context of the present paper. The metastability can be seen both in the description of the equilibria and the non-equilibrium
dynamics but in somewhat different ways.}

{First, recall that} the front projection of the equilibrium Legendrian submanifold of the
mean-field Ising model corresponds to the green part of the curve represented
on Fig.~3. The blue and black parts (``the lower triangle") so far did not appear
in our story at all. On the other hand, the metastable piece (appearing in blue on Fig.~2 and Fig.~3) can be reconstructed from the stable (green) one via the analytic continuation.
Indeed, the Lagrangian projection of the equilibrium submanifold to the $(p,q)$-plane is given by the equation
$$q = -bp + \beta^{-1} \text{arctanh} (p)$$
with a real analytic right hand side (see Fig.~2). {That is, while the stable part of the Legendrian submanifold $\Lambda_\infty$ discussed
in Section~\ref{subsec-limit} is singular, this manifold admits an analytic continuation and the metastable part is a component of the resulting
smooth manifold.}

Interestingly enough, the blue part of the equilibrium submanifold consisting
of  metastable equilibria can also
be detected either from the non-equilibrium Glauber Markov
or Fokker-Planck dynamics {that we have discussed above.}
Recall that metastability, roughly speaking, means that in the thermodynamic limit, i.e. as the size $N$ of the model increases, the system spends larger and larger time in the metastable region. {This phenomenon is known both for the Glauber dynamics and Fokker-Planck equations with a small diffusion,
of which (\ref{nov302}) is an example when $N\gg 1$.}
We refer to Theorem 13.1 in \cite{BH}
and to~\cite{IS}  for precise formulations for the Glauber and the Fokker-Planck evolutions, respectively. Remarkably, {as far as the
non-equilibrium dynamics is concerned,} the metastable behavior takes place for large, albeit finite, values of $N$, while the analytic continuation argument is applied to the curve arising in the thermodynamic limit $N \to +\infty$.   It would be interesting to
find a conceptual explanation of this phenomenon  (cf. \cite{Penrose}) .

\section{Proof of the drift correspondence}\label{sec-appendix}
Here we prove Theorem \ref{thm-drifts}.

\subsection{Proof of (i)} For the sake of brevity,
set $F=F_{b,\beta}$ and $\rho=\rho_N$, and write $u(m_i)= u_i$,  $F(m_i)= F_i$, $\rho(m_i) = \rho_i$, etc.
Write, for $2\le i\le N$,
\be\label{nov106}
\bal
\Big[\Div (\hat{\rho}\nabla F)\Big]_i&=-\farc{N^2}{4}\sum_{j=1}^{N+1}
k_{ij}\hat\rho_{ij}(F_i-F_j)\\
&=
-\farc{N^2}{4}\Big(k_{i,i-1}\hat\rho_{i,i-1}(F_i-F_{i-1})
+k_{i,i+1}\hat\rho_{i,i+1}(F_i-F_{i+1})\Big)\\
&=-\farc{N^2}{4}\Big(u_i\hat\rho_{i,i-1}(F_i-F_{i-1})
+u_{i+1}\hat\rho_{i,i+1}(F_i-F_{i+1})\Big).
\enbal
\ee
We will expand (\ref{nov106}) up to the order $O(1/N)$.
First, we note that
\be\label{nov112}
\bal
\hat\rho_{i,i+1}&=\frac{\rho_i-\rho_{i+1}}{\ln \rho_i- \ln \rho_{i+1}}=
\frac{\rho_{i+1}-\rho_i}{ \ln\Big(1+\farc{\rho_{i+1}-\rho_i}{\rho_i}\Big)}\\
&=
(\rho_{i+1}-\rho_i)\Big(\farc{\rho_{i+1}-\rho_i}{\rho_i}-\farc{1}{2}\Big(\farc{\rho_{i+1}-\rho_i}{\rho_i}
\Big)^2+O(1/N^2)\Big)^{-1}\\
&=\Big(\farc{1}{\rho_i}-\farc{1}{2}\farc{\rho_{i+1}-\rho_i}{\rho^2_i}+O(1/N^2)\Big)^{-1}=
\Big(\farc{1}{\rho_i}-\farc{\rho'_i}{N\rho^2_i}+O(1/N^2)\Big)^{-1}\\
&=\rho_i\Big(1-\farc{\rho'_i}{N\rho_i}+O(1/N^2)\Big)^{-1}=\rho_i\Big(1+\farc{\rho'_i}{N\rho_i}+O(1/N^2)\Big)
\\
&=\rho_i+\farc{1}{N}\rho'_i+O(1/N^2).
\enbal
\ee
It follows from (\ref{nov112})   that
\be\label{nov114}
\bal
\hat\rho_{i,i-1}&= \rho_{i-1}+\farc{1}{N}\rho'_{i-1} +O(1/N^2)=\rho_i-\farc{2}{N}\rho'_i +\farc{1}{N}\rho'_i+O(1/N^2)\\
&=\rho_i-\farc{1}{N}\rho'_i +O(1/N^2) .
\enbal
\ee
We set
\be\label{nov118}
\bal
\Big[\Div (\hat{\rho}\odot\nabla F)\Big]_i&=-\farc{N^2}{4}\Big(u_i\hat\rho_{i,i-1}(F_i-F_{i-1})
-u_{i+1}\hat\rho_{i,i+1}(F_{i+1}-F_i)\Big)\\
&=-I_1+I_2,
\enbal
\ee
and use (\ref{nov114}) a  to write
\be\label{nov116}
\bal
&
I_1:=\farc{N^2}{4}u_i\hat\rho_{i,i-1}(F_i-F_{i-1})\\
&=\farc{N^2}{4}u_i\Big(\rho_i-\farc{1}{N}\rho'_i
 +O(1/N^2)\Big)
 \Big(\farc{2F'_i}{N}-\farc{2F''_i}{N^2}+
 o(1/N^2)
  \Big)\\
&= \farc{N}{2}u_i\Big(\rho_i-\farc{1}{N}\rho'_i
+O(1/N^2)\Big)
 \Big(F'_i -\farc{F''_i}{N}+O(1/N^2)\Big)\\
&=\farc{N}{2}\Big[u_i\rho_iF'_i-\farc{1}{N}u_i\rho'_iF'_i
-\farc{u_i\rho_iF''_i}{N}+O(1/N^2)\Big].
\enbal
\ee
The second term in (\ref{nov118}) can be written as
\be\label{nov120}
\bal
&I_2:=\farc{N^2}{4}u_{i+1}\hat\rho_{i,i+1}(F_{i+1}-F_i)=
\farc{N^2}{4}\big(u_i+\farc{2}{N}u'_i+O(1/N^2)\big)\\
&\times\Big(\rho_i+\farc{1}{N}\rho'_i +O(1/N^2)\Big)
\Big(\farc{2F'_i}{N}+\farc{2F''_i}{N^2}+
 o(1/N^2)
 \Big)\\
&=\farc{N}{2}\big(u_i+\farc{2}{N}u'_i+O(1/N^2)\big)\Big(\rho_i+\farc{1}{N}\rho'_i+O(1/N^2)\Big)
\Big(F'_i +\farc{F''_i}{N}+O(1/N^2)\Big)\\
&=\farc{N}{2}\Big(u_i\rho_iF'_i+\farc{1}{N}u_i\rho'_iF'_i
+\farc{u_i\rho_iF''_i}{N}+\farc{2}{N}u'_i\rho_iF'_i+O(1/N^2)\Big).
\enbal
\ee
Combining (\ref{nov116}) and (\ref{nov120}) gives
\be\label{oct2718}
\bal
&\Big[\Div (\hat{\rho}\odot\nabla F)\Big]_i=-\farc{N}{2}\Big[u_i\rho_iF'_i-\farc{1}{N}u_i\rho'_iF'_i
-\farc{u_i\rho_iF''_i}{N} \\
&-u_i\rho_iF'_i-\farc{1}{N}u_i\rho'_iF'_i
-\farc{u_i\rho_iF''_i}{N}\\
&-\farc{2}{N}u'_i\rho_iF'_i+O(1/N^2)\Big]\\
&=-\farc{N}{2}\Big[-\farc{2}{N}u_i\rho'_iF'_i-\farc{2u_i\rho_iF''_i}{N}
-\farc{2}{N}u'_i\rho_iF'_i+O(1/N^2)\Big]\\
&=(u\rho F')' (m_i)+O(1/N).
\enbal
\ee

Finally, let us deal with the remainder.
Rewrite $r_N$, by using \eqref{eq-remainder}, in the form
$$r_N(m) = g_0(N) + g_1(m) + N^{-1}g_2(m) +
 O(1/N^2)
 \;,$$
where $g_0(N)$ is a constant, and $g_1,g_2$ are smooth functions of $m$.
Arguing as above we see that for $l=1,2$
$$\Div (\hat{\rho}\odot\nabla g_l) = (u\rho g_l')' +O(1/N)\;.$$
Furthermore, it follows from the definitions that
$$\Div (\hat{\rho}\odot\nabla
 O(1/N^2))
 = O(1)\;.$$
Thus,
$$N^{-1}\beta^{-1}\Div (\hat{\rho}\odot\nabla r_N)=
 O(1/N)
\;.$$
Together with \eqref{oct2718},
this completes the proof.

 \qed

\subsection{Proof of (ii)} Put $a(m) = c(m - \theta(m))$.
Set $\delta:=2/N$, considering it as a small parameter. For the sake of brevity we write $u(m_i)= u_i$,  $a(m_i)= a_i$, $\theta(m_i) = \theta_i$. Denote $$u^*_i = (u_i-u_{i-1})/\delta\;.$$
We use that
\begin{equation}\label{eq-rel-a}
\frac{1}{2} \left(\Pi_-(m_{i+1}) + \Pi_+(m_{i-1})\right) = u_i + c\delta\;,
\end{equation}
and
\begin{equation}\label{eq-rel-u}
\frac{1}{2} \left(\Pi_-(m_{i+1}) - \Pi_+(m_{i-1})\right) = a_i- c \theta_i\delta \;.
\end{equation}
Writing $W_N =(w_{ij})$ and calculating, we get
$$w_{i,i-1} = - a_i + c\delta(\theta_i+1),\; w_{i,i+1} = a_i +c\delta (-\theta_i - u^*_{i+1}+1)\;.$$
Furthermore,
$$w_{ii}= - (w_{i,i-1} + w_{i+1,i})\;.$$
Thus
$$W_N\rho_N (m_i) = a_i(\rho_{i+1} - \rho_{i-1}) + \rho_i (a_{i+1}-a_{i-1}) + c\delta (R_1 - R_2-R_3)\;,$$
where
$$R_1 = \rho_{i-1} - 2\rho_i + \rho_{i+1} = O(\delta^2)\;,$$
$$R_2 = \rho_i (\theta_{i+1}-\theta_{i-1}) + \theta_i(\rho_{i+1}-\rho_{i-1})= O(\delta)\;,$$
and
$$R_3 = u_{i+1}^*\rho_{i+1} - u_{i}^* \rho_i
= \delta^{-1}\left((u_{i+1}-u_i)\rho_{i+1} - (u_i - u_{i-1})\rho_i\right)$$
$$ = \delta^{-1}\left( u'_i\rho_i\delta - u'_i\rho_i\delta +  O(\delta^2)\right) = O(\delta)\;.$$
 It follows that
$$W_N\rho_N (m_i) = (a\rho)' (m_i) \cdot (2\delta) + O(\delta^2)\;.$$
Thus
$$NW_N\rho_N(m_i) = (4a\rho)' + O(1/N)\;,$$
as required. \qed
\begin{figure} [htb]
\centering
\includegraphics[width=0.6\textwidth]{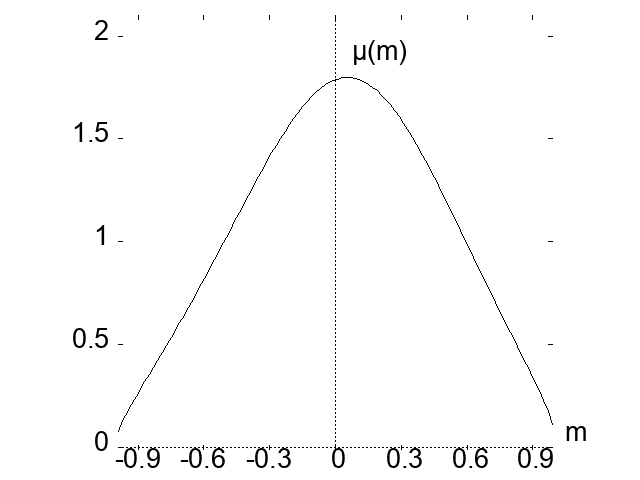}
\caption{Function $\mu$ for $q=-0.08,b=1,\beta=1.8$}
\end{figure}
\subsection{Proof of (iii)} One readily checks that
the functions $m-\theta(m)$ and $$F'_{b,\beta}(m) = -q-bm+\beta^{-1}\text{arctanh}(m)\;$$
have the same zeroes, and furthermore, their ratio
\begin{equation}\label{eq-mu}
\mu(m) = \frac{m-\theta(m)}{-q-bm+\beta^{-1}\text{arctanh}(m)}\;,
\end{equation}
is a smooth positive function, see Fig.~8.
It follows that
\begin{equation}\label{eq-drifts}
{\rm Drift}_{G}= 4u\mu {\rm Drift}_{FP}\;,
\end{equation}
as requred. \qed

\section{Discussion: Fokker-Planck vs. contact dynamics}\label{sec-cont}

In this section, we outline a connection between the Fokker-Planck, {or gradient flow,} dynamics and
contact dynamics, in the following toy example.
Let $M \subset \R^n$ be a strictly convex compact domain,
$h: M \to \R$ be a strictly convex function, and $U$ be the image
of~$M$ under the map $m \to h'(m)$, where prime here and below stands
for $\partial h /\partial m$.
We write $d\mu$ for the Euclidean volume form.

Consider a thermodynamic system whose space of microscopic states is $M$,
with the  Hamiltonian
$H_N = NH$, where
\[
H(q,m):=-qm+h(m),~m \in M,~q \in U,
\]
and $N{\gg 1}$
is a large parameter. The Gibbs distribution
\[
\rho_N = \mathcal{Z}_N^{-1}e^{-\beta H_N},
\]
for $N \gg 1$,
can be approximated by
\be\label{24jul1606}
\rho_\infty={\delta(m-m_*(q))}.
\ee
Here, $m_*(q):= (h')^{-1}(q)$, is the point where $H(q,m)$ attains its minimum in $m\in M$.
Let us also denote by $h^*: U \to \R$ the Legendre transform of $h$, {given explicitly by
\be
h^*(q)=qm^*(q)-h(m^*(q)),
\ee
so that
\be\label{24jul1608}
(h^*)'(q)=m^*(q).
\ee
}

The rescaled pressure and minus free energy are given by
\begin{equation}\label{eq-cont-1}
p = -N^{-1}\frac{\partial \Phi}{\partial q} (q,\rho)= \int_M m\rho(m)d\mu\;,
\end{equation}
and
\begin{equation}\label{eq-cont-2}
z = -N^{-1}\Phi(q,\rho) = (\beta N)^{-1}S(\rho) - \int_M H(q,m)\rho(m)d\mu\;.
\end{equation}
Substituting the approximation $\rho=\rho_\infty$ into \eqref{eq-cont-1},\eqref{eq-cont-2} and taking the limit $N \to +\infty$ {that
eliminates the entropy term in (\ref{eq-cont-2}),} we get
a relation between $p,q,z$ that can be expressed as a
Legendrian submanifold
\begin{equation}\label{eq-cont-lambdainf}
\Lambda_\infty = \left\{ p= (h^*)'(q),\; z = h^*(q) \right\} \subset  \J^1 \R^n\;.
\end{equation}
{We used above (\ref{24jul1606}) and (\ref{24jul1608}).}
The submanifold $\Lambda_\infty$
can be viewed as a limit, as $N \to +\infty$,
of the equilibrium Legendrians
corresponding to the parameter value $N$. Note that it is a counterpart, in the context of our toy model, of the Curie-Weiss
Legendrian \eqref{nov2004}.

Similar to the discrete Fokker-Planck equation deduced in Section~\ref{subsec-grad-flow-wasserstein-fokker-planck},
we can derive the continuous Fokker-Planck equation for $\rho$
with respect to the Euclidean metric on $M$, {as the gradient flow dynamics of the rescaled free energy defined
in (\ref{eq-cont-2}), with respect to the  Wasserstein distance on the space of probability measures $\cP(M)$}.
The function $\rho$ in the equation should satisfy a Robin boundary condition similar to \eqref{nov304}.
The diffusion part
 in the equation {comes from the entropy term in (\ref{eq-cont-2}) and}
 is of the order $N^{-1}$.
 Let us discard it (this is a non-rigorous step!)
 and remain with the drift
\be\label{24jul1610}
\dot{\rho} = \text{div} (\rho H').
\ee
Let us restrict  the Fokker-Planck equation {(\ref{24jul1610})}
to the
delta-functions $\rho{(t,m)}={\delta(m-p(t))}$
concentrated at the interior points of $M$.
In this case, {(\ref{24jul1610})} becomes
the minus gradient equation
(with respect to
the $p$ variable!):
 \begin{equation}\label{eq-minusgr-1}
\dot{p} = - H'(q,p) = q- h'(p)\;.
\end{equation}
By \eqref{eq-cont-2}, {the corresponding free energy is}
\begin{equation}\label{eqref-z-vsp}
z = -H(q,p) = qp-h(p)\;.
\end{equation}
Thus, as $t \to +\infty$,
 the
 trajectories
$(p(t),q(t),z(t))$ converge to the point $((h')^{-1}(q),q, h^*(q))$.
Since $ (h')^{-1}(q)= (h^*)'(q)$ {by (\ref{24jul1608}),} we see that this limiting point lies
on the Legendrian submanifold $\Lambda_\infty$.
Until now, we simply summarized the construction of the Fokker-Planck drift in a toy example.
Our next goal is to relate it to contact dynamics.

To this end let us consider an ad-hoc contact Hamiltonian
\[
E(p,q,z)=-z+h^*(q).
\]
The corresponding contact dynamics is given by
\be\label{24jul1612}
\bal
&\dot{p}=\frac{\partial E}{\partial q} + p\frac{\partial E}{\partial z}=    -p + (h')^{-1}(q),\\
&\dot{q}=-\frac{\partial E}{\partial p} =0,\\
&\dot{z} = E -p\frac{\partial E}{\partial p} = -z + h^*(q).
\enbal
\ee
It readily follows that $\Lambda_\infty$ is an attractor for the contact Hamiltonian flow.
Since $q$ is constant in both Fokker-Planck and contact dynamics, we focus on the evolution
of the $p$-variable. Consider the vector fields
\[
v(p,q) = q-h'(p),~~w(p,q) = -p + (h')^{-1}(q),
\]
describing the evolution of $p$ for the Fokker-Planck dynamics (\ref{eq-minusgr-1}) and for the contact dynamics (\ref{24jul1612}),
respectively.

Denote by
$$
\pi_{pq}:
\J^1\R^n \to T^*\R^n,\; (p,q,z) \mapsto (p,q),$$
the natural projection.
Note that the point $p= (h')^{-1}(q)$
is a global attracting equilibrium point for both $v$ and $w$.

\begin{prop}\label{prop-Prig}
The Euclidean scalar product of $v$ and $w$ is strictly positive away of the equilibrium point: for some $c>0$,
\begin{equation}\label{eq-scalprod}
\left(v(p,q),w(p,q)\right) \geq c|p-(h')^{-1}(q)|^2 \;\;\forall p \in M, q \in U\;.
\end{equation}
Here $ | \cdot | $ denotes the norm induced by the Euclidean scalar product on $\R^n$.

In particular, $w$ is a gradient-like vector field for $-H$, or equivalently, $H$
is  a  global Lyapunov function for $w$ (for
 a  fixed $q$, with respect to  the  $p$-variable).
\end{prop}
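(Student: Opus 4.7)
\textbf{Proof plan for Proposition \ref{prop-Prig}.}

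The plan is to rewrite both vector fields in terms of the displacement from their common equilibrium point $p^*(q) := (h')^{-1}(q)$ and then reduce the scalar product to a positive-definite quadratic form controlled by the Hessian of $h$. First I would set $y := p - p^*(q)$. Since $h'(p^*(q)) = q$, this immediately gives $w(p,q) = -y$ and $v(p,q) = q - h'(p) = h'(p^*(q)) - h'(p^*(q)+y)$. By the fundamental theorem of calculus applied to the map $s \mapsto h'(p^*(q) + sy)$, one has
\begin{equation*}
h'(p^*(q)+y) - h'(p^*(q)) = \left(\int_0^1 h''(p^*(q)+sy)\, ds\right) y,
\end{equation*}
where $h''$ denotes the Hessian of $h$. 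Hence
\begin{equation*}
\big(v(p,q), w(p,q)\big) = y^{T}\!\left(\int_0^1 h''(p^*(q)+sy)\, ds\right) y.
\end{equation*}

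Next I would invoke the strict convexity of $h$ together with the compactness of $M$: the Hessian $h''$ is continuous and positive definite on the compact set $M$, so there exists $c>0$ such that $h''(m) \succeq c\, I$ for every $m \in M$. Since the segment joining $p^*(q)$ and $p$ lies in the convex domain $M$, the integrand above dominates $c\, I$ pointwise, and we obtain $(v,w) \geq c|y|^2 = c|p-p^*(q)|^2$, which is the inequality \eqref{eq-scalprod}.

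For the second assertion, I would observe that, since $H(q,p) = -qp + h(p)$, the $p$-gradient of $H$ is $\nabla_p H(q,p) = -q + h'(p) = -v(p,q)$. Consequently, along a trajectory $p(t)$ of the contact-driven field $w$ at fixed $q$,
\begin{equation*}
\frac{d}{dt} H(q,p(t)) = \nabla_p H(q,p(t))\cdot w(p(t),q) = -\big(v(p(t),q), w(p(t),q)\big) \leq -c\,|p(t)-p^*(q)|^2.
\end{equation*}
Thus $H$ strictly decreases along non-equilibrium trajectories of $w$ and attains its unique minimum at $p^*(q)$, which is precisely the statement that $H$ is a global Lyapunov function (with respect to $p$) for $w$, and equivalently that $w$ is gradient-like for $-H$.

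The argument is essentially algebraic once the reduction $y = p - p^*(q)$ is made; the only point that requires care is extracting a \emph{uniform} lower bound $c>0$ on the spectrum of $h''$, which is the mild obstacle here and is supplied by the strict convexity hypothesis combined with the compactness of $M$.
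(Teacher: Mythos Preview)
Your proof is correct and follows essentially the same route as the paper: both compute $(v,w) = (h'(P)-h'(p),\,P-p)$ with $P=(h')^{-1}(q)$ and then extract the lower bound $c|P-p|^2$ from strict convexity of $h$ on the compact set $M$. The only difference is cosmetic: the paper cites a monotonicity inequality from \cite{BNO} for this last step, whereas you supply the explicit Hessian--integral argument, and you also spell out the Lyapunov computation that the paper leaves implicit.
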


\begin{proof}
Put $P:= (h')^{-1}(q)$. Since $h$ is
 a strictly convex function on the compact domain $M$,
we have
(see \cite[p. 72]{BNO})
 $$\left(v(p,q),w(p,q)\right) = \left(h'(P)-h'(p), P-p\right) \geq c|P-p|^2\;,$$
where $c$ depends only on $h$.
\end{proof}

\begin{rem}\label{rem-haslach} {\rm According to Haslach \cite{Haslach},
all macroscopic states of a thermodynamics system lie in a
certain hypersurface in the thermodynamic phase space determined by the energy of the system.
In the context of our example this hypersurface is given by (up to adjusting notations and sign conventions) $\Sigma= \{z= qp - h(p)\}$. Every path $(p(t),q(t))$ on $\R^{2n}$ can be uniquely lifted to $\Sigma$ as $$\gamma(t) = \left(p(t),q(t), q(t)p(t) - h(p(t)\right))\;.$$
Write $\lambda= dz-pdq$ for the Gibbs form.
Haslach calls such a path $\gamma(t)$ {\it admissible}
if $\lambda(\dot{\gamma}) >0$
(note again the difference between our sign convention and the one in \cite{Haslach}).
Take any trajectory $p(t)$ of the vector field $w(p,q)$
away of the equilibrium, and lift the path $(p(t),q)$ to a path $\gamma(t) \subset \Sigma$. With this language,
Proposition \ref{prop-Prig} states that $\gamma(t)$ is admissible in the sense of Haslach:
$$\lambda(\dot{\gamma}) = \dot{z}= q\dot{p} - h'(p)\dot{p} = (v(q,p), \dot{p})= (v,w) > 0\;.$$
The equality $\lambda(\dot{\gamma}) =  (v,\dot{p})$ corresponds to formula (8) in \cite{Haslach},
and our Fokker-Planck drift~$v$ corresponds to the gradient of the generalized energy in \cite{Haslach}.
The integral curves of $v$ are also admissible in the sense of Haslach, as observed in \cite{Haslach}.
}
\end{rem}

Let us recall that describing relaxation in non-equilibrium thermodynamics
by means of contact Hamiltonian flows is motivated by the fact that these flows preserve the kernel of the Gibbs form $dz-pdq$, which manifests a combination of the first and the second laws of thermodynamics. While it is  unexplored,
 with rare exceptions \cite{EP,Goto-JMP2015,Goto},
how to model such processes on the microscopic level, the above consideration shows that
certain contact Hamiltonian flows capture some features of the models coming from
a microscopic perspective: the contact flow we just described has the thermodynamic
Hamiltonian as a global Lyapunov function.
Prigogine in his 1977 Nobel lecture \cite{Pri} observes local stability of
the thermodynamic equilibrium due to the fact that thermodynamic potentials
(such as free energy) serve as Lyapunov functions near the equilibrium, and addresses a question:
{\it Does stability property hold further away from equilibrium?} Thus, Proposition~\ref{prop-Prig} is
in agreement with Prigogine's observation. At the same time, in practice one often deals with non-convex Hamiltonians
(e.g., in the case of the Curie-Weiss model considered above), in which case the contact Hamiltonian $-pq+h^*(q)$ is not smooth and the corresponding contact dynamics is not well defined. It could happen, however, that while the Hamiltonian system is ill-posed, individual trajectories with given (possibly, asymptotic) boundary conditions can be detected by means of modern symplectic topology, cf. \cite{EP}.  This is a subject of further investigation.

We refer to a recent paper \cite{Goto24} for a different take on the connection between
the Fokker-Planck equation and contact dynamics.

\section{Conclusion} \label{sec:conlcusion}
We discussed, in the context of contact thermodynamics, relaxation of the Curie-Weiss model with $N$ spins towards the equilibrium. Geometrically,
the set of equilibrium macroscopic states is given by a Legendrian submanifold in the thermodynamic phase space. We focused on a relaxation process modeled
by the gradient flow of the generating function of the equilibrium Legendrian. This generating function is given by the free energy, and the gradient equation with respect to the Wasserstein metric on the space of probability distribution gives rise to a discrete version of the Fokker-Planck equation. We showed that for
large $N$ this equation is closely related to another model of relaxation given by the Glauber Markov chain (see Section \ref{sec-comp}
and Theorem \ref{thm-drifts}). To this end, we explored the Legendrian submanifold arising in the thermodynamic limit $N \to +\infty$, and found,
by using some tools of non-smooth convex analysis, that it is given by a piece-wise smooth continuous curve consisting of branches of stable equilibria and a linear segment (see Theorem \ref{thm-limit}). Finally, in Section \ref{subsec-chords}  we looked at relaxation of the Curie-Weiss magnet after a jump of its temperature and the external magnetic field. We found that in the thermodynamic limit and for special values of parameters
there exist ``instant" relaxation processes which correspond to Reeb chords connecting
the initial and the terminal equilibrium Legendrians.

Our results lead to a number of open problems for future research.

\medskip\noindent {\bf 1.} There is an evidence that in the context of contact thermodynamics, the thermodynamic limit is related to the $\gamma$-convergence of Lagrangian submanifolds (see Remark \ref{rem-conv}). It would be interesting to
adjust the definition of the $\gamma$-convergence to thermodynamics and to prove a rigorous statement in this direction.

\medskip\noindent {\bf 2.} While we proved that the discrete Fokker-Planck and the Glauber equations are related, the link between corresponding time evolutions requires further exploration
(see Remark~\ref{rem-eq-vs-evolution}).

\medskip\noindent {\bf 3.} It would be interesting to explain the metastable behaviour of the Glauber dynamics from the geometric viewpoint (see Section \ref{subsec-meta}). The difficulty is that the metastable equilibria arise in our context in a non-direct way: one has to pass to the thermodynamic limit $N \to +\infty$, and then perform the analytic continuation of the
above-mentioned branches of stable equilibria on the limiting Legendrian submanifold.

\medskip\noindent {\bf 4.}  It would be interesting to relate the Fokker-Plank equation to the
Cauchy-Riemann equation as both arise as gradient equations for generating functions in symplectic topology (see Remark \ref{rem-Floer}).

\medskip\noindent {\bf 5.} It would be interesting to explore further ``instant" relaxation
processes corresponding to Reeb chords observed in Section \ref{subsec-chords}, and, in particular, detect such chords in more sophisticated examples by methods of contact topology. Furthermore,
it is unclear what is (if any) a counterpart of these processes for the case of finite $N$.

\section{Appendix: Calculating the gradient \eqref{eq-grad-Wass}} \label{App-2}

For vectors $v,w \in \R^{N+1}$ put
$$\langle v,w \rangle = \frac{2}{N}\sum_i v_iw_i\;,$$
With this notation one readily checks that
\begin{equation}\label{ap2-1}
\langle \phi, \Div \Psi \rangle  =  -  (\nabla \phi, \Psi)\;,
\end{equation}
for every vector $\phi$ and antisymmetric matrix $\Psi$.
Furthermore,
\begin{equation}\label{ap2-2}
\Div \nabla \rho = \frac{N^2}{4}(K - {\bf 1})\rho\;.
\end{equation}
Put $H=(H^{(1)},\dots,H^{(N+1)})$ with
\be\label{24jul1508}
H^{(i)}:= N^{-1}H_N(m_i),
\ee
where
$H_N$ is given by \eqref{eq-ham-red}. Observe that
\begin{equation}\label {ap2-3}
\overline{\Phi}_N = -\frac{1}{\beta N} \langle \rho, \ln \rho \rangle - \langle \rho, H \rangle\;.
\end{equation}
Denote $v := \dot{\rho} \in T_\rho {\mathcal P}$ and write it in the form $v = - \Div (\hat{\rho}\odot \nabla \phi)$.

First, put $G:= \langle \rho, H \rangle$, and
$$\xi = - \Div (\hat{\rho} \odot \nabla H)\;.$$
Then, by \eqref{ap2-1},  the differential of $G$ evaluated on the tangent vector $v$
is given by
$$dG(v) = \langle v, H \rangle = -\langle\Div (\hat{\rho}\odot \nabla \phi), H\rangle = (\hat{\rho}\odot \nabla \phi,  \nabla H) = (\xi, v)_W\;.$$
Thus,
\begin{equation}\label{eq-ap2-10}
\nabla_W G = - \Div (\hat{\rho} \odot \nabla H)\;.
\end{equation}

Next, put $E := \langle \rho, \ln \rho \rangle$, and note that
$$\hat{\rho}_{ij} \cdot (\nabla \ln\rho)_{ij} = (\nabla \rho)_{ij}\;.$$
It follows that
$$dE(v)= \langle v, \ln \rho \rangle + {2}{N}\sum_i \rho_i \cdot \frac{v_i}{\rho_i}\;.$$
Recall that $\sum v_i = 0$ for every $v \in T_\rho \mathcal{P}$, and hence the second summand vanishes. Thus,
$$dE(v)= (\nabla \phi,  \hat{\rho}\odot \nabla(\ln\rho)) =
(\nabla \phi, \nabla \rho)\;.$$
Choose $\psi$ such that $\nabla \rho = \hat{\rho}\odot \nabla \psi$.
By \eqref{ap2-2},
$$\eta:= -\Div( \hat{\rho}\odot \nabla \psi) = -\frac{N^2}{4}(K - {\bf 1}) \rho. $$
Thus,
$$dE(v) = (v,\eta)_W\;,$$
and hence
\begin{equation}\label{eq-ap2-11}
\nabla_W E =-\frac{N^2}{4}(K - {\bf 1}) \rho \;.
\end{equation}
Since by \eqref{ap2-3}
$$\overline{\Phi} =  -\frac{1}{\beta N} E - H\;,$$
we get from \eqref{eq-ap2-10} and \eqref{eq-ap2-11}
$$\nabla_W \overline{\Phi} (\rho) =   \frac{N}{4\beta} (K - {\bf 1}) \rho + \Div (\hat{\rho} \odot \nabla H)\;.$$
Recalling (\ref{eq-ham-red}) and (\ref{24jul1508}), we see that we have derived formula \eqref{eq-grad-Wass}.

\section{Appendix: Calculating transition probabilities \eqref{eq-trans}}
We use the notation of Section \ref{finite-CW} and Section \ref{sec-comp}.
In \cite{Glauber} Glauber considered a Markov chain represented by the following graph:
the space of vertices is the space of spin configurations $\Upsilon_N$, and two configurations $\sigma,\sigma'$ are connected by an edge whenever they differ at exactly one point
$j \in \{1, \dots, N\}$. Recall that $\theta(m) = \tanh (\beta(q+bm))$.
Fix $q$. Recall that $m = N^{-1}\sum \sigma_i$. Put $\Delta = \frac{1}{2}(h(\sigma')-h(\sigma))$,
where $h$ is given by \eqref{eq-hamnoneff}.
The transitions happen with the time step $\tau$.
The transition probabilities $P(\sigma,\sigma')$ are calculated based on the following
standard assumptions.
First, we assume the detailed balance condition
$$\frac{P(\sigma,\sigma')}{P(\sigma',\sigma)} = e^{-2\beta\Delta}\;.$$
Second, we assume that the transition rate at each edge is constant:
$$P(\sigma,\sigma') + P(\sigma',\sigma) = 4c\tau\;,$$
where $4c$ is a time scale parameter.
This yields
\begin{equation}\label{eq-prob-1}
P(\sigma,\sigma') = 2c\tau (1 - \tanh (\beta\Delta))\;.
\end{equation}
A direct calculation shows that if $\sigma_j=-\sigma'_j=1$, then $\Delta = q+b(m-2/N)$,
and if $\sigma_j = -\sigma'_j=-1$, then $\Delta = -q - b(m+2/N)$.
Therefore,
\begin{equation} \label{eq-tran-pm}
P(\sigma,\sigma') ={2c\tau}(1 - \theta(m -2/N)), \;\;\text{if}\;\; \sigma_j=1, \sigma'_j=-1\;,
\end{equation}
and
\begin{equation} \label{eq-tran-mp}
P(\sigma,\sigma') = {2c\tau}(1 + \theta(m +2/N)), \;\;\text{if}\;\; \sigma_j=-1, \sigma'_j=1\;.
\end{equation}

Next we pass from the Markov chain on $\Upsilon_N$ to the one on $M_N$.
We use the approach of Sections 13.1 and 9.3 of \cite{BH}.
The {\it lumping} construction enables one to replace the Glauber Markov chain
by the random walk on the set
$$M_N = \{-1, -1+2/N, \dots, 1-2/N, 1\}\;.$$
Write $\Pi_-(m)$ for the probability
of jump from $m$ to $m-2/N$, and $\Pi_+(m)$ for the probability
of jump from $m$ to $m+2/N$. These probabilities can be computed
by using
formula (9.3.3) in \cite{BH}. Namely, choose $\sigma \in \Upsilon_N$ with $m(\sigma)= m$.
Then $$\Pi_-(m) = \sum_{\sigma'}P(\sigma,\sigma')\;,$$
where the sum is taken over all spin configurations $\sigma'$ with $m(\sigma')= m-2/N$ and which are connected with
$\sigma$ by an edge. Since $\sigma$ has exactly $N(1+m)/2$ slots occupied by $+1$ and
each of them can flip to $-1$, we have from \eqref{eq-tran-pm} that
$$\Pi_-(m) = cN\tau(1+m)(1 - \theta(m -2/N))\;.$$
This proves the first equality in \eqref{eq-trans}. The proof of the second one is analogous.
\qed

\section{Appendix: Calculation of the effective Hamiltonian}  \label{sec-entropy}
Consider a binomial coefficient
$$C_N(m) = \binom{N}{\frac{(1+m)}{2} N}\;.$$
Let $\rho(m)$ be a probability density on $M_N$, and $P(\sigma)$ be the lift of the measure
$\rho d\mu$ to $\Upsilon_N$:
\begin{equation}\label{eq-lift}
\frac{2}{N}\rho (m(\sigma)) = C_N (m(\sigma))P(\sigma)\;.
\end{equation}
The entropy $S(P)$ is given by
$$S(P) = -\sum_\sigma P(\sigma)\ln P(\sigma)\;.$$
By using \eqref{eq-lift} we calculate
\begin{equation}\label{eq-S-1}
S(P) = - \int_{M_N} \rho \ln \rho d\mu - \ln(2/N) + \int_{M_N} \ln C_N(m) \rho  d\mu\;.
\end{equation}
By using \cite[Corollary 2.4]{St} we get that
\begin{equation}\label{eq-binomial-1}
C_N(m) = -Nc(m) -\frac{1}{2}\ln(2\pi N)+ \frac{1}{2}\ln\frac{4}{1-m^2}
 + \frac{1}{12N}\left(1-\frac{4}{1-m^2}\right) + O(1/N^2)\;,
 \end{equation}
with
$$c(m) = \frac{1+m}{2} \ln \frac{1+m}{2} + \frac{1-m}{2}\ln \frac{1-m}{2}\;.$$
Recall that the Hamiltonian $h$ of a spin configuration is given by \eqref{eq-hamnoneff}.
Thus, by \eqref{eq-Phi-rho},
\begin{equation}\label{eq-ent-2}
\Phi(q,\rho) = -\beta^{-1} S(\rho) + N\int_{M_N}( h(q,m) + \beta^{-1}c(m))\rho d\mu + \int \beta^{-1}r_N(m)\rho(m)d\mu\;,
\end{equation}
with
\begin{equation}\label{eq-ent-3}
r_N(m) = \ln(2/N) +  \frac{1}{2}\ln(2\pi N)- \frac{1}{2}\ln\frac{4}{1-m^2}
 - \frac{1}{12N}\left(1-\frac{4}{1-m^2}\right) + O(1/N^2)\;.
 \end{equation}
Note that $h + \beta^{-1}c = F_{b,\beta}$, which proves \eqref{eq-ham-red}.
Expression \eqref{eq-ent-3} is identical to \eqref{eq-remainder}.
\qed

\noindent {\sc Michael Entov\\
Department of Mathematics,\\
Technion - Israel Institute of Technology,\\
Haifa 32000, Israel}\\
{\tt  entov@technion.ac.il}
\\

\noindent {\sc Leonid Polterovich\\
School of Mathematical Sciences,\\
Tel Aviv University,\\
Tel Aviv 69978, Israel}\\
{\tt  polterov@tauex.tau.ac.il}
\\

\noindent {\sc Lenya Ryzhik\\
Department of Mathematics,\\
Stanford University,\\
Stanford CA 94305, USA}\\
{\tt  ryzhik@stanford.edu}
\\

\end{document}